\documentclass[copyright,creativecommons]{eptcs}
\usepackage{breakurl}             

\usepackage{calrsfs}
\usepackage{stmaryrd}
\usepackage{mathpartir}
\usepackage{bussproofs}
\usepackage{amsmath,amssymb,amsthm}
\usepackage{bbold}
\usepackage{color}

\usepackage{tikz}
\usetikzlibrary{shapes}


\newif\ifproofs
\proofstrue
\proofsfalse

\newif\ifcomments
\commentstrue

\newif\iflong
\longtrue


\definecolor{lucared}{rgb}{0.5,0,0}
\definecolor{lucagreen}{rgb}{0,0.3,0}




\newcommand{\eod}{
\hspace*{\fill}{$\blacksquare$}
}

\newcommand{\eoe}{
\hspace*{\fill}{$\blacklozenge$}
}

\newcommand{\ccs}{\textsc{ccs}}


\newcommand{\process}{P}
\newcommand{\varprocess}{Q}
\newcommand{\type}{t}

\newcommand{\channel}{\mathsf{c}}
\newcommand{\varchannel}{\mathsf{d}}

\newcommand{\nameset}{\mathcal{N}}

\newcommand{\procset}{\mathcal{P}}
\newcommand{\sesst}{T}
\newcommand{\varsesst}{S}

\newcommand{\orth}[1]{{#1^{\perp}}{}}
\newcommand{\biorth}[1]{\orth{\orth{#1}}}


\newcommand{\tint}{\mathtt{int}}
\newcommand{\treal}{\mathtt{real}}
\newcommand{\tbool}{\mathtt{bool}}

\newcommand{\cend}{\mathsf{\color{lucagreen}end}}
\newcommand{\cwin}{\cend}

\newcommand{\cbottom}{\mathbb{\color{lucared}0}}
\newcommand{\ctop}{\mathbb{\color{lucared}1}}

\newcommand{\cin}[1]{{?#1}}
\newcommand{\cout}[1]{{!#1}}
\newcommand{\parop}{\mathbin{|}}

\newcommand{\sand}{\wedge}
\newcommand{\sor}{\vee}
\newcommand{\bigsand}{\bigwedge}
\newcommand{\bigsor}{\bigvee}
\newcommand{\choice}{\mathbin{\oblong}}

\newcommand{\pnull}{\mathbf{\color{lucared}0}}
\newcommand{\pwin}{\mathbf{\color{lucagreen}1}}



\newcommand{\win}{\checkmark}

\newcommand{\lred}[1]{\stackrel{#1}{\longrightarrow}}

\newcommand{\wlred}[1]{\stackrel{#1}{\Longrightarrow}}

\newcommand{\nlred}[1]{\longarrownot\lred{#1}}

\newcommand{\rulename}[1]{{\textsc{(#1)}}}


\newcommand{\rapprox}[1]{\rset{r}|_1}

\newcommand{\mtop}{\procset}
\newcommand{\mbottom}{\emptyset}

\newcommand{\initc}{\mathsf{init}}
\newcommand{\init}[1]{\init(#1)}

\newcommand{\co}[1]{\overline{#1}}
\newcommand{\coset}[1]{\overline{#1}}

\newcommand{\sem}[1]{\llbracket#1\rrbracket}
\newcommand{\csem}[1]{\sem{#1}}
\newcommand{\gsem}[2]{\mathcal{G}_{#1}(#2)}

\newcommand{\orthogonal}{\mathrel{\bot}}



\newcommand{\rset}[1]{\textsc{#1}}



\newcommand{\nin}{\not\in}
\newcommand{\neqsim}{\not\eqsim}


\newcommand{\subc}{\lesssim}
\newcommand{\subs}{\lesssim}
\newcommand{\nsubs}{\not\subs}
\newcommand{\suba}{\leqslant}
\newcommand{\seqa}{=}

\newcommand{\must}{\mathrel{\Downarrow}}
\newcommand{\may}{\mathrel{\downarrow}}
\newcommand{\mayconverge}{\mathclose{\downarrow}}
\newcommand{\mustconverge}{\mathclose{\Downarrow}}

\newcommand{\seqc}{\eqsim}
\newcommand{\seqs}{\eqsim}
\newcommand{\nseqs}{\neqsim}
\newcommand{\viable}[1]{\text{$#1$ viable}}


\newcommand{\eqdef}{\stackrel{\text{\tiny\rm def}}{=}}









\newcommand{\mytitle}{Session Types = Intersection Types + Union Types}

\title{\mytitle}
\author{Luca Padovani
\institute{Dipartimento di Informatica, Universit\`a di Torino\\
Corso Svizzera 185, Torino, Italy}
\email{padovani@di.unito.it}
}

\newtheorem{definition}{Definition}[section]
\newtheorem{lemma}{Lemma}[section]
\newtheorem{proposition}{Proposition}[section]
\newtheorem{theorem}{Theorem}[section]
\newtheorem{corollary}{Corollary}[section]

\newtheorem{example}{Example}[section]

\begin{document}
\maketitle

\begin{abstract}
  We propose a semantically grounded theory of session types which
  relies on intersection and union types.  We argue that intersection
  and union types are natural candidates for modeling branching points
  in session types and we show that the resulting theory overcomes
  some important defects of related behavioral theories. In
  particular, intersections and unions provide a native solution to
  the problem of computing joins and meets of session types. Also, the
  subtyping relation turns out to be a pre-congruence, while this is
  not always the case in related behavioral theories.
\end{abstract}

\section{Introduction}
\label{sec:intro}

Session
types~\cite{Honda1993,HondaVasconcelosKubo98,HondaYoshidaCarbone08}
are protocol descriptions that constrain the use of communication
channels in distributed systems.  In these systems, processes engage
into a conversation by first establishing a \emph{session} on some
private channel and then carrying on the conversation within the
protected scope of the session.
The session type \emph{prescribes}, for each process involved in the
session, the sequence and the type of messages the process is allowed
to send or expected to receive at each given time.
For example, the session type $\co{a}.\co a.b$ associated with some
channel $\channel$ states that a process can use $\channel$ for
sending two $a$ messages and then waiting for a $b$ message, \emph{in
  this order}.
Names $a$ and $b$ may stand for either message types, labels, method
names and so forth, depending on the process language one is
considering.

In most session type theories it is possible to specify protocols with
\emph{branching points} indicating alternative behaviors: for example,
the session type $\co{a}.\sesst \choice \co{b}.\varsesst$ usually
means that a process chooses to send either an $a$ message or a $b$
message and then behaves according to $\sesst$ or $\varsesst$
depending on the message that it has sent; dually, the session type
$a.\sesst \choice b.\varsesst$ usually means that a process waits for
either an $a$ message or a $b$ message, and then behaves according to
the respective continuation.
In these examples, as in the session type theories cited above, one is
making the implicit assumption that the process actively choosing to
follow one particular branch is the one that sends messages, while the
process passively waiting for the decision is the one that receives
messages. In practice, it is appropriate to devise two distinct
branching operators, instead of a single one $\choice$ like in the
examples above, to emphasize this fact. This is the key intuition
in~\cite{CastagnaDezaniGiachinoPadovani09,Padovani09,BarbaneraDeLiguoro10}
where session types are studied as proper terms of a simple process
algebra with action prefixes and two choice operators: the
\emph{internal choice} $\sesst \oplus \varsesst$ denotes that the
process decides which branch, $\sesst$ or $\varsesst$, to take and
behaves accordingly; the \emph{external choice} $\sesst + \varsesst$
denotes that the process offers two possible behaviors, $\sesst$ and
$\varsesst$, and leaves the decision as to which one to follow to the
process at the other end of the communication channel.

The approach advocated
in~\cite{CastagnaDezaniGiachinoPadovani09,Padovani09} recasts session
types into well-known formalisms (process algebras) by fully embracing
their behavioral nature.
This permits the definition of an elegant, semantically grounded
subtyping relation $\subs$ for session types as an adaptation of the
well-known \emph{must} pre-order for
processes~\cite{CCSWithoutTau,testing}. Nonetheless, the resulting
theory of session types suffers from a few shortcomings.
First of all, the semantics of the external choice is a bit involved
because in some contexts it is indistinguishable from that of the
internal choice: the typical example, which is also one of the pivotal
laws of the \emph{must} pre-order, is $a.\sesst + a.\varsesst \seqs
a.(\sesst \oplus \varsesst)$ (we write $\seqs$ for the equivalence
relation induced by $\subs$).
As a direct consequence of this, the subtyping relation $\subs$ fails
to be a pre-congruence. Indeed we have $a.b \subs a.b + b.c$ but $a.b
+ b.d \nsubs a.b + b.c + b.d \seqs a.b + b.(c \oplus d)$. This
poses practical problems (one has to characterize the contexts in
which subtyping is safe) as well as theoretical ones ($\subs$ is
harder to characterize axiomatically).
Finally, recent developments of session type theories have shown a
growing interest toward the definition of \emph{meet} and \emph{join}
operators over session types~\cite{Mezzina08}, which must be defined
in an \emph{ad hoc} manner since these do not always correspond to the
internal choice and the external choice.

In this paper we propose a language of session types which uses
intersection types and union types for modeling branching points. The
idea is that when some channel is typed by the intersection type
$\co{a}.\sesst \sand \co{b}.\varsesst$ this means that the channel has
both type $\co{a}.\sesst$ and also type $\co{b}.\varsesst$, namely a
process conforming to this type can choose to send an $a$ message or a
$b$ message and then use the channel as respectively prescribed by
$\sesst$ and $\varsesst$. Dually, when some channel is typed by the
union type $a.\sesst \sor b.\varsesst$ this means that the process
does not precisely know the type of the channel, which may be either
$a.\sesst$ or $b.\varsesst$. Hence it must be ready to receive both an
$a$ message and a $b$ message. It is the message received from the
channel that helps the process disambiguate the type of the
channel. If the message does not provide enough information, the
ambiguity is propagated, hence one pivotal law of our theory is
$a.\sesst \sor a.\varsesst \seqs a.(\sesst \sor \varsesst)$.

In summary, we argue that intersection and union types are natural,
type theoretic alternatives for internal and external choices,
respectively. Furthermore, they allow us to develop a decidable theory
of session types that are natively equipped with join and meet
operators, and where the semantically defined subtyping relation is a
pre-congruence.

\paragraph{Structure of the paper.} We devote
Section~\ref{sec:processes} to presenting a process algebra, so that
we can formalize processes and correct process interactions in dyadic
sessions (i.e., we consider sessions linking exactly two processes).
We introduce session types in
Section~\ref{sec:types}, where we use the formalization of processes
from the previous section for defining their semantics. The section
includes the description of an algorithm for deciding the subtyping
relation, a type system for checking whether a process conforms to a
given session type, as well as an extended example motivating the need
to compute meet and join of session types. We conclude in
Section~\ref{sec:conclusion} with a summary of the paper and a few
hints at future research directions.
For the sake of simplicity, in this paper we restrict ourselves to
finite processes and finite types. Indeed, the relationship between
branching operators and intersection and union types is independent of
the fact that processes may or may not be infinite. On the contrary,
dealing with infinite behaviors introduces some technical
difficulties, briefly touched upon in Section~\ref{sec:conclusion},
that we plan to address in a forthcoming and more comprehensive work.
\iflong
For the sake of readability, proofs and other technical details have
been postponed to sections~\ref{sec:extra_processes}
and~\ref{sec:extra_types}.
\else
Proofs and other technical details can be found in the full version of
the paper, which is available from the author's home page.
\fi


\section{Processes}
\label{sec:processes}

\begin{table}
\caption{\label{tab:processes}\strut Syntax of processes.}
\framebox[\textwidth]{
\begin{math}
\displaystyle
\begin{array}{@{}c@{\qquad}c@{}}
\begin{array}[t]{@{}rcl@{\quad}l@{}}
  \text{\textbf{Process}} \quad \process
  & ::= & \pnull & \text{(deadlock)} \\
  & | & \pwin & \text{(termination)} \\
  & | & \alpha.\process & \text{(prefix)} \\
  & | & \process \oplus \process & \text{(internal choice)} \\
  & | & \process + \process & \text{(external choice)} \\
\end{array}
&
\begin{array}[t]{@{}rcl@{\quad}l@{}}
  \text{\textbf{Action}} \quad \alpha
  & ::= & a & \text{(input)} \\
  & | & \co a & \text{(output)} \\
\end{array}
\end{array}
\end{math}
}
\end{table}

Let us fix some notation: we let $a$, $b$, $\dots$ range over some set
$\mathcal{N}$ of \emph{action names} whose meaning is left
unspecified; we let $\process$, $\varprocess$, $\dots$ range over
\emph{processes} and $\alpha$, $\beta$, $\dots$ range over
\emph{actions}. We distinguish \emph{input actions} of the form $a$
from \emph{output actions} of the form $\co{a}$; we say that
$\co\alpha$ is the \emph{co-action} of $\alpha$ where $\co{\co a} =
a$.
We consider the simple language of sequential processes whose grammar
is described in Table~\ref{tab:processes}. Syntactically speaking the
language is a minor variation of \ccs{} without
$\tau$'s~\cite{CCSWithoutTau,HennessyBook} without relabeling,
restriction, and parallel composition.
The terms $\pnull$ and $\pwin$ denote idle processes that perform no
further action. The former is deadlocked, while the latter represents
a successfully terminated interaction (since we are going to give
processes a testing semantics, we prefer denoting success by means of
a dedicated term $\pwin$ rather than a special action as in other
theories~\cite{testing}).
The term $\alpha.\process$ denotes a process that performs the action
$\alpha$ and then continues as $\process$.
The term $\process \oplus \varprocess$ denotes a process that
internally decides whether to behave as $\process$ or as
$\varprocess$.
Finally, the term $\process + \varprocess$ is the external choice of
$\process$ and $\varprocess$ and denotes a process that externally
offers two behaviors, $\process$ and $\varprocess$, and lets the
environment decide which one it should follow. As we will see shortly,
the decision of the environment is guided, as usual, by the initial
actions performed by $\process$ and $\varprocess$.
In the following we will usually omit trailing $\pwin$'s and write,
for example, $a.\co{b}$ instead of $a.\co{b}.\pwin$.
We will also write $\procset$ for the set of all processes.

\begin{table}
  \caption{\label{tab:lts.processes}\strut Operational semantics of processes (symmetric rules omitted).}
\framebox[\textwidth]{
\begin{math}
\displaystyle
\begin{array}{@{}c@{}}
  \inferrule[\rulename{r1}]{}{
    \pwin \lred{\mathstrut\win} \pwin
  }
  \qquad
  \inferrule[\rulename{r2}]{}{
    \alpha.\process \lred{\mathstrut\alpha} \process
  }
  \qquad
  \inferrule[\rulename{r3}]{}{
    \process \oplus \varprocess \lred{\mathstrut} \process
  }
  \\\\
  \inferrule[\rulename{r4}]{
    \process \lred{\mathstrut} \process'
  }{
    \process + \varprocess \lred{\mathstrut} \process' + \varprocess
  }
  \qquad
  \inferrule[\rulename{r5}]{
    \process \lred{\mathstrut\alpha} \process'
  }{
    \process + \varprocess \lred{\mathstrut\alpha} \process'
  }
  \qquad
  \inferrule[\rulename{r6}]{
    \process \lred{\mathstrut\co a} \process'
  }{
    \process + \varprocess \lred{\mathstrut} \co a.\process'
  }
\end{array}
\end{math}
}
\end{table}

The formal meaning of processes is given by a transition system,
defined in Table~\ref{tab:lts.processes} (symmetric rules have been
omitted).  The system consists of two relations, an unlabelled one
$\lred{}$ and a labelled one $\lred{\mu}$ where $\mu$ is a
\emph{label} is an element of $\nameset \cup \coset{\nameset} \cup \{
\win \}$ and $\win \nin \nameset \cup \coset{\nameset}$ is a flag
denoting successful termination.
We extend the $\co{\,\cdot\,}$ involution to labels so that
$\co\win = \win$ and to sets of labels $\rset{a}$ so that
$\co{\rset{a}} = \{ \co\mu \mid \mu \in \rset{a} \}$.
Intuitively $\lred{}$ represents \emph{internal}, \emph{invisible}
transitions of a process, while $\lred{\mu}$ represents
\emph{external}, \emph{visible} transitions of a process. We briefly
describe the meaning of the rules in the following paragraph:
rule~\rulename{r1} signals the fact that the process $\pwin$ has
terminated successfully;
rule~\rulename{r2} states that a process $\alpha.\process$ may execute
the action $\alpha$ and reduce to $\process$;
rule~\rulename{r3} (and the symmetric one) states that a process
$\process \oplus \varprocess$ internally decides to reduce to either
$\process$ or $\varprocess$;
rule~\rulename{r4} (and the symmetric one) states that internal
decisions taken in some branch of an external choice do not preempt the
other branch of the external choice. This rule is common in process
algebras distinguishing between internal and external choices, such as
\ccs{} without $\tau$'s~\cite{CCSWithoutTau} from which out process
language is inspired.
Rule~\rulename{r5} (and the symmetric one) states that an external
choice offers any action that is offered by either branch of the
choice.
Rule~\rulename{r6} and its symmetric is possibly the less familiar
one. It states that a process performing an output action may preempt
other branches of an external choice. This rule has been originally
introduced in~\cite{CastellaniHennessy98} where the message sent is
detached from its corresponding continuation, which is thus
immediately capable of interacting with the surrounding
environment. Here, as in~\cite{CastagnaDezaniGiachinoPadovani09}, we
keep the message and its continuation attached together, so as to
model an asynchronous form of communication where the order of
messages is preserved. This is practically justified in our setting as
we aim at modelling dyadic sessions.
In the following we will sometimes use the following notation: we
write $\wlred{}$ for the reflexive and transitive closure of
$\lred{}$; we let $\wlred{\mu}$ be the composition
$\wlred{}\lred{\mu}\wlred{}$; we write $\process \nlred{}$ if there is
no $\process'$ such that $\process \lred{} \process'$; we write
$\process \wlred{\mu}$ if $\process \wlred{\mu} \process'$ for some
$\process'$; let $\initc(\process) \eqdef \{ \mu \mid \process
\wlred{\mu} \}$.

The next and final step is to describe how two processes ``complete
each other'', in the sense that they interact without
errors. Informally, $\process$ and $\varprocess$ interact without
errors if, regardless of the respective internal choices, they are
always capable of synchronizing by means of complementary actions or
they have both successfully terminated. We formalize this as the
following orthogonality relation between processes:

\begin{definition}[orthogonal processes]
  Let $\lred{}$ be the smallest relation between systems
  $\process \parop \varprocess$ of two processes such that:
\[
\inferrule{
  \process \lred{\mathstrut} \process'
}{
  \process \parop \varprocess \lred{\mathstrut} \process' \parop \varprocess
}
\qquad
\inferrule{
  \varprocess \lred{\mathstrut} \varprocess'
}{
  \process \parop \varprocess \lred{\mathstrut} \process \parop \varprocess'
}
\qquad
\inferrule{
  \process \lred{\mathstrut\co\alpha} \process'
  \\
  \varprocess \lred{\mathstrut\alpha} \varprocess'
}{
  \process \parop \varprocess \lred{\mathstrut} \process' \parop \varprocess'
}
\]
and let $\wlred{}$ be the reflexive, transitive closure of $\lred{}$.
We write $\process \parop \varprocess \nlred{}$ if there are no
$\process'$ and $\varprocess'$ such that $\process \parop \varprocess
\lred{} \process' \parop \varprocess'$.
We say that $\process$ and $\varprocess$ are \emph{orthogonal},
notation $\process \orthogonal \varprocess$, if $\process \parop \varprocess
\wlred{} \process' \parop \varprocess' \nlred{}$ implies $\process'
\lred{\win}$ and $\varprocess' \lred{\win}$.
\eod
\end{definition}

As an example, consider the process $\process \eqdef \co{a}.(a +
b)$. Then $a.\co{a}$, $a.\co{b}$, $a.(\co{a} \oplus \co{b})$ are all
orthogonal to $\process$. The processes $a$ and $\process$ are
\emph{not} orthogonal because $a \parop \process \lred{} \pwin \parop
a + b \nlred{}$ and $a + b \nlred{\win}$ (both processes must be in a
successfully terminated state when they reach a stable
configuration). Also $a.(\co{a} \oplus \co{c})$ and $\process$ are not
orthogonal because $a.(\co{a} \oplus \co{c}) \parop \process \lred{}
\co{a} \oplus \co{c} \parop a + b \lred{} \co{c} \parop a + b
\nlred{}$.

Orthogonality provides us with a notion of ``test'' that we can use
for discriminating processes, in the spirit of the testing
framework~\cite{testing}. Informally, when $\process \orthogonal
\varprocess$ we can see $\varprocess$ as a test that $\process$
succeeds to pass (since orthogonality is symmetric, we can also reason
the other way around and see $\process$ as a test for
$\varprocess$). Equivalently, we can see $\varprocess$ as a context
that completes $\process$. Then, we can say that two processes are
equivalent if they pass the same tests, or if they are completed by
the same contexts. In fact, it makes sense to interpret processes as
the set of tests they pass and to define a pre-order between
processes, which we call \emph{refinement}, as the inclusion of their
corresponding interpretations.

\begin{definition}[process interpretation and refinement]
\label{def:refinement}
  Let $\sem\process \eqdef \{ \varprocess \in \procset \mid \process
  \orthogonal \varprocess \}$.
  We say that $\varprocess$ is a \emph{refinement} of $\process$,
  notation $\process \subc \varprocess$, if and only if $\sem\process
  \subseteq \sem\varprocess$.
  We write $\seqc$ for the equivalence relation induced by $\subc$,
  namely ${\seqc} = {\subc} \cap {\subc}^{-1}$.
  \eod
\end{definition}

Intuitively, $\varprocess$ is a refinement of $\process$ if any test
that $\process$ passes is also passed by $\varprocess$. Therefore, it
is safe to replace $\process$ with $\varprocess$ as any context in
which $\process$ operates correctly will continue to do so also with
$\varprocess$.
%
%
The equational theory induced by refinement is closely related to the
\emph{must} testing pre-order~\cite{testing}.  In particular, we have
$\process \oplus \varprocess \subc \process$ since
$\sem{\process\oplus\varprocess} = \sem\process \cap
\sem\varprocess$. This equivalence lets us appreciate the fact that
the internal choice operator does correspond to an intersection when
we interpret processes as the sets of their orthogonals. Alas, under
this interpretation the external choice operator does \emph{not}
correspond to a union, for three reasons:
\begin{itemize}
\item There can be processes in $\sem{\process + \varprocess}$ that
  are not contained in $\sem\process \cup \sem\varprocess$. For
  example, $\co{a} \oplus \co{b} \in \sem{a + b} \setminus \sem{a}
  \cup \sem{b}$. This is fairly common in every framework that
  accounts for non-deterministic entities. In our case, $\co{a} \oplus
  \co{b}$ is orthogonal to $a + b$, but not to $a$ or $b$ alone.

\item Sometimes $\sem{\process + \varprocess} = \sem{\process \oplus
    \varprocess} = \sem\process \cap \sem\varprocess$, namely the
  external choice can be internal choice in disguise. For example, we
  have $a.\co{a} + a.\co{b} \seqc a.\co{a} \oplus a.\co{b} \seqc
  a.(\co{a} \oplus \co{b})$.
  The problem is that both branches of the external choice are guarded
  by the same action $a$, and since it is the \emph{initial} performed
  action to determine the chosen branch the process $a.\co{a} +
  a.\co{b}$ does not offer an external choice, but is actually
  performing an internal one.
  A different instance of this phenomenon occurs when both branches of
  an external choice are guarded by output actions, because of
  rule~\rulename{r6}. For example, we have $\co{a} + \co{b} \seqc
  \co{a} \oplus \co{b}$.

\item The fact that output actions can preempt branches of external
  choices can make such branches useless. For example $\co{a} + b
  \seqc \co{a} + \pwin \seqc \co{a}$, since $\co{a} + \process \lred{}
  \co{a}$ by rule~\rulename{r6}.
\end{itemize}

A direct consequence of these subtleties related with the external
choice is that refinement fails to be a pre-congruence. In particular,
we are now able to justify the (in)equivalences $a.b + b.d \nsubs
a.b + b.c + b.d \seqs a.b + b.(c \oplus d)$ that we have anticipated
in the introduction.


Observe that there are pathological processes that are intrinsically
flawed and cannot interact correctly with any other process.  For
example, $a \oplus b$ has no orthogonals since it is not possible to
know which message, $a$ or $b$, it is ready to receive. As another
example the process $\process = a \oplus \co{b}$ has no orthogonals:
no process interacting with it can send an $a$ message, since
$\process \lred{} \co{b}$; at the same time, a process waiting for the
$b$ message from $\process$ may starve forever since $\process \lred{}
a$.



\section{Session Types}
\label{sec:types}

In this section we introduce our language of session types, we study
their semantics, and we provide a subtyping algorithm and a type
system for checking processes against session types.

\subsection{Syntax}

\begin{table}
\caption{\label{tab:types}\strut Syntax of session types.}
\framebox[\textwidth]{
\begin{math}
\displaystyle
\begin{array}{@{}l@{\quad}l@{\quad}l@{}}
\begin{array}[t]{@{}rcl@{\quad}l@{}}
  \text{\textbf{Session type}} \quad \sesst
  & ::= & \cbottom & \text{(bottom)} \\
  & | & \ctop & \text{(top)} \\
  & | & \cwin & \text{(termination)} \\
  & | & \alpha.\sesst & \text{(prefix)} \\
  & | & \sesst \sand \sesst & \text{(intersection)} \\
  & | & \sesst \sor \sesst & \text{(union)} \\
\end{array}
\end{array}
\end{math}
}
\end{table}

We let $\sesst$, $\varsesst$, $\dots$ range over \emph{session types},
which are defined by the grammar in Table~\ref{tab:types}.
The types $\cbottom$ and $\ctop$ characterize channels which cannot be
successfully used for any interaction. We postpone a more detailed
discussion about $\cbottom$ and $\ctop$ when we will formally define
their semantics. For the time being, it suffices to say that
$\cbottom$ and $\ctop$ represent the largest and smallest element in
the lattice of session types we are about to define.
The type $\cwin$ denotes channels on which no further action is
possible. There is a fundamental distinction between $\cwin$ and the
two types $\cbottom$ and $\ctop$: $\cwin$ denotes a successfully
terminated interaction, while $\cbottom$ and $\ctop$ denote the
impossibility to carry on any interaction;
the type $\alpha.\sesst$ denotes channels on which it is possible to
perform an action $\alpha$. Actions are the same ones that occur
within processes, but the point of view is slightly different: a
process \emph{executes} an action, while a session type indicates the
possibility or the obligation for a process to execute an action.  We
will appreciate more concretely this difference in
Section~\ref{sec:checker}, where we will see that the same process can
be successfully checked against different session types.
The type $\sesst \sand \varsesst$ denotes channels that have
\emph{both} types $\sesst$ and $\varsesst$. For example $\co{a}.\cwin
\sand \co{b}.\cwin$ denotes a channel that has both type
$\co{a}.\cwin$ and also type $\co{b}.\cwin$, namely it can be used for
sending both messages $a$ and $b$.
Finally, the type $\sesst \sor \varsesst$ denotes channels that either
have type $\sesst$ or $\varsesst$. For instance the type $a.\cwin \sor
b.\cwin$ associated with a channel means that a process using that
channel must be ready to receive both a message $a$ and a message $b$,
since it does not know whether the type of the channel is $a.\cwin$ or
$b.\cwin$.\footnote{We are making the implicit assumption that ``using
  a channel'' means either sending a message on it or waiting a
  message from it and that no \emph{type-case} construct is available
  for querying the actual type of a channel.}
%
To avoid clutter, in the following we will omit trailing $\cwin$'s and
write, for instance, $\co{a} \sand \co{b}$ instead of $\co{a}.\cwin
\sand \co{b}.\cwin$ when this generates no ambiguity with the syntax
of processes.

Before giving a formal semantics to session types let us discuss a few
examples to highlight similarities and differences between them and
processes.
It should be pretty obvious that $\oplus$ and $\sand$ play similar
roles: the ability for a process $\process \oplus \varprocess$ to
autonomously decide which behavior, $\process$ or $\varprocess$, to
perform indicates that the session type associated with the channel it
is using allows both alternatives, it has \emph{both} types.
No such correspondence exists between $+$ and $\sor$. For instance,
consider $\process = a.b.\co{a} + a.c.\co{b}$ and $\sesst = a.b.\co{a}
\sor a.c.\co{b}$.
The external choice in $\process$ is guarded by the same action $a$,
meaning that after performing action $a$ the process may reduce to
either $b.\co{a}$ or to $c.\co{b}$, the choice being
nondeterministic. As we have already remarked at the end of
Section~\ref{sec:processes}, one can show that $\process$ is
equivalent to $a.(b.\co{a} \oplus c.\co{b})$, where the
nondeterministic choice between the two residual branches is explicit.
The session type $\sesst$, on the other hand, tells us something
different: we do not know whether the channel we are using has type
$a.b.\co{a}$ or $a.c.\co{b}$ and receiving message $a$ from it does
not help to solve this ambiguity. Therefore, after the message $a$ has
been received, we are left with a channel whose associated session
type is $b.\co{a} \sor c.\co{b}$. At this stage, depending on the
message, $b$ or $c$, that is received, we are able to distinguish the
type of the channel, and to send the appropriate message (either $a$
or $b$) before terminating.
In summary, $\process$ and $\sesst$ specify quite different behaviors,
and in fact while $\sesst$ is perfectly reasonable, in the sense that
there are processes that conform to $\sesst$ and that can correctly
interact with corresponding orthogonal processes, the reader may
easily verify that $\process$ has no orthogonals.

\subsection{Semantics}

Intuitively we want to define the semantics $\csem\sesst$ of a session
type $\sesst$ as a set of processes, so that session types can be
related by comparing the corresponding interpretations pretty much as
we did for processes (Definition~\ref{def:refinement}).
To assist the reader with this intuition, consider the scenario
depicted below
\begin{center}
\begin{tikzpicture}
  \node at (0,0) {$\sesst \vdash \process$};
  \node at (4,0) [cylinder, draw, minimum height=6cm, shape aspect=.5] {$\channel$};
  \node at (8.5,0) {$\varprocess \in \sem\sesst$};
\end{tikzpicture}
\end{center}
where the notation $\sesst \vdash \process$ means that $\process$,
which we will think of as the ``server'', is using the end point of
channel $\channel$ according to the session type $\sesst$. We write
$\sesst \vdash \process$ instead of $\channel : \sesst \vdash
\process$ since we assume that $\process$ acts on one channel only.
The idea is that the interpretation of $\sesst$ is the set of
``client'' processes $\varprocess$ that can interact correctly with
$\process$ when placed at the other end point of the channel
$\channel$.

Before we address the formal definition of $\csem\sesst$ we must
realize that not every set of processes makes sense when interpreted
in this way:
\begin{itemize}
\item if a server is able to interact correctly with all of the
  clients in the set $X = \{ \co a, \co b \}$, then it is also able to
  interact correctly with $\co a \oplus \co b$;

\item no server is able to interact correctly with \emph{all} of the
  clients in the set $Y = \{ \co a, b \}$ because this server would
  have to perform both an input on $a$ and an output on $b$ at the
  same time.
\end{itemize}

We conclude that neither $X$ nor $Y$ above are \emph{closed} sets of
processes that can serve as proper denotations of a session type: $X$
and $X \cup \{ \co{a} \oplus \co{b} \}$ are indistinguishable because
every server $\process$ that includes $X$ in its interpretation
includes also $X \cup \{ \co{a} \oplus \co{b} \}$; $Y$ and $\procset$
are indistinguishable because there is no server that includes $Y$ in
its interpretation just as there is no server that includes the whole
$\procset$ in its interpretation.  We therefore need a closure
operation over sets of processes, which we define in terms of
\emph{orthogonal sets}, defined as follows:

\begin{definition}[orthogonal set]
\label{def:orth}
  Let $X \subseteq \procset$. Then $\orth{X} \eqdef \{ \process \in
  \procset \mid X \subseteq \sem\process \}$.
  \eod
\end{definition}

Intuitively, the orthogonal of some set of processes $X$ is the set of
those processes that include $X$ in their interpretation.
If we go back to the problematic sets of processes described earlier,
we have $\orth{X} = \{ a + b, a + b + c, a + b + c + d, \dots\}$ and
$\orth{Y} = \emptyset$.
Clearly the orthogonal of a set $X$ flips the perspective, in the
sense that if $X$ is a set of ``clients'', then $\orth{X}$ is the set
of ``servers'' of those clients. Therefore, we define the closure as
the \emph{bi-orthogonal} $\biorth{(\cdot)}$. For instance we have
$\biorth{X} = \{ \co a, \co b, \co a \oplus \co b,\dots \}$ and
$\biorth{Y} = \procset$.
We say that a set $X$ of processes is \emph{closed} if it is equal to
its closure, namely if $X = \biorth{X}$.
The fact that $\biorth{(\cdot)}$ is indeed a closure operator is
formalized by the following result:

\begin{proposition}
\label{prop:closure}
The bi-orthogonal is a closure, namely it is extensive, monotonic, and
idempotent:
\begin{enumerate}
\item $X \subseteq \biorth{X}$;

\item $X \subseteq Y$ implies $\biorth{X} \subseteq \biorth{Y}$;

\item $\biorth{X} = \biorth{\biorth{X}}$.
\end{enumerate}
\end{proposition}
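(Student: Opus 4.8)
The plan is to reduce all three properties to two elementary facts about the single orthogonal operator $\orth{(\cdot)}$: that it is \emph{antitone}, and that it satisfies the unit inequality $X \subseteq \biorth{X}$. Both rest on the observation that orthogonality is symmetric, which is immediate from the definition since the reduction rules for $\process \parop \varprocess$ are symmetric in the two components and the terminal condition ($\process' \lred{\win}$ and $\varprocess' \lred{\win}$) treats both sides alike. Using this I would first record the convenient reformulation $\process \in \orth{X} \iff \forall \varprocess \in X.\ \process \orthogonal \varprocess$, noting that $\varprocess \in \sem{\process}$, $\process \orthogonal \varprocess$, and $\process \in \sem{\varprocess}$ are all equivalent.

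First I would prove antitonicity as a lemma: $X \subseteq Y$ implies $\orth{Y} \subseteq \orth{X}$, which is immediate since a process orthogonal to every element of $Y$ is in particular orthogonal to every element of the smaller set $X$. This at once yields claim~(2): applying the lemma twice to $X \subseteq Y$ gives first $\orth{Y} \subseteq \orth{X}$ and then $\biorth{X} \subseteq \biorth{Y}$. For extensivity, claim~(1), take $\process \in X$; to show $\process \in \biorth{X}$ I must check $\process \orthogonal \varprocess$ for every $\varprocess \in \orth{X}$. But $\varprocess \in \orth{X}$ means $\varprocess \orthogonal \process'$ for all $\process' \in X$, and since $\process \in X$ this gives $\varprocess \orthogonal \process$, hence $\process \orthogonal \varprocess$ by symmetry. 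This is the one step where symmetry of $\orthogonal$ is genuinely used.

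Finally, for idempotency, claim~(3), I would derive the identity $\orth{\biorth{X}} = \orth{X}$, i.e.\ that the triple orthogonal collapses to the single one. One inclusion is extensivity applied to the set $\orth{X}$, giving $\orth{X} \subseteq \orth{\biorth{X}}$; the reverse inclusion $\orth{\biorth{X}} \subseteq \orth{X}$ comes from feeding the instance $X \subseteq \biorth{X}$ of extensivity into the antitonicity lemma. With this in hand, $\biorth{\biorth{X}} = \orth{(\orth{\biorth{X}})} = \orth{(\orth{X})} = \biorth{X}$ gives~(3); alternatively the inclusion $\biorth{X} \subseteq \biorth{\biorth{X}}$ is just extensivity instantiated at $\biorth{X}$.

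I expect no genuine obstacle here: the whole argument is the standard polarity/closure bookkeeping, and the only content specific to this setting is checking that $\orthogonal$ is symmetric, which turns the two copies of $\orth{(\cdot)}$ into a Galois connection with itself. The only care required is to keep the direction of the inclusions straight, since $\orth{(\cdot)}$ reverses them.
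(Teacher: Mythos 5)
Your proof is correct, and it differs from the paper's in a way worth noting: the paper's proof is a three-line appeal to general theory --- it restates $\orth{X}$ as $\{ \process \in \procset \mid \forall \varprocess \in X : \process \orthogonal \varprocess \}$, observes that $(\orth{(\cdot)}, \orth{(\cdot)})$ is a Galois connection (a polarity, which implicitly uses the symmetry of $\orthogonal$) between $\langle 2^\procset, \subseteq \rangle$ and $\langle 2^\procset, \supseteq \rangle$, and then cites the known fact that the composite of the two maps of a Galois connection is a closure operator. You instead unfold that known fact into a self-contained elementary argument: antitonicity of $\orth{(\cdot)}$, extensivity via symmetry of $\orthogonal$, and idempotency via the triple-orthogonal collapse $\orth{\biorth{X}} = \orth{X}$. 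What the paper's route buys is brevity and the placement of the result in a standard framework; what your route buys is that it makes explicit the one point where the specific structure of the setting matters (symmetry of $\orthogonal$, which you correctly justify from the symmetric shape of the reduction rules and of the success condition), and it proves along the way the identity $\orth{\biorth{X}} = \orth{X}$, i.e.\ that every orthogonal set is closed --- a fact the paper also needs but only establishes later, as Proposition~\ref{prop:orth}(4), and there it is \emph{derived from} the present proposition, whereas in your development it is proved directly and used to obtain idempotency. Both dependencies are sound; yours just inverts the order.
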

\begin{proof}
  Observe that $\orth{X} = \{ \process \in \procset \mid \forall
  \varprocess \in X : \process \orthogonal \varprocess \}$.
  Then $(\orth{(\cdot)}, \orth{(\cdot)})$ is a Galois connection (more
  precisely, a polarity) between the posets $\langle 2^\procset,
  \subseteq \rangle$ and $\langle 2^\procset, \supseteq\rangle$.
  Then it is a known fact that $\biorth{(\cdot)} = \orth{(\cdot)}
  \circ \orth{(\cdot)}$ is a closure operator on the poset $\langle
  2^\procset, \subseteq \rangle$.
\end{proof}

Then we define the interpretation of session types in terms of
closures of sets of processes, where we interpret $\sand$ and $\sor$
as set-theoretic intersections and unions.

\begin{definition}[session type semantics]
\label{def:sesst}
The semantics of a session type is inductively defined by the
following equations:
\[
\begin{array}[b]{rcl}
  \csem\cbottom & = & \mbottom \\
  \csem\ctop & = & \mtop \\
  \csem\cwin & = & \biorth{\{\pwin\}} \\
  \csem{\alpha.\sesst} & = & \biorth{\{ \co\alpha.\process \mid \process \in \csem\sesst \}} \\
  \csem{\sesst_1 \sand \sesst_2} & = & \csem{\sesst_1} \cap \csem{\sesst_2} \\
  \csem{\sesst_1 \sor \sesst_2} & = & \biorth{(\csem{\sesst_1} \cup \csem{\sesst_2})}
\end{array}
\]
\end{definition}

As we comment on the definition of $\csem\cdot{}$, it is useful to
think of $\csem\sesst$ as of the set of clients that a server using a
channel with type $\sesst$ must be able to satisfy.
Since $\cbottom$ denotes the empty set of clients, a channel typed by
$\cbottom$ is the easiest to use for a server, for the server is not
required to satisfy any process.
Dually, a channel typed by $\ctop$ is the hardest to use, for the
server is required to satify any process. As this is impossible to
achieve (there is no process that is dual of every process in
$\procset$), no server can effectively use a channel typed by $\ctop$.
From a type-theoretic point of view, $\cbottom$ and $\ctop$ represent
two dual notions of emptyness: $\cbottom$ means absence of clients,
$\ctop$ means absence of servers. Later on we will see that any
session type different from $\cbottom$ and $\ctop$ is
\emph{inhabited}, in the sense that it admits at least one client and
at least one server.
A channel typed by $\cwin$ represents those clients that are satisfied
even if they do not receive any further message. The process $\pwin$
clearly is a client of $\cwin$, but it's not the only one: any process
that guarantees the $\win$ action is a client of $\cwin$. Hence we
have $\csem\cwin{} = \{ \pwin, \pwin + a, \pwin + a + b, \dots \}$. In
particular, no process that is immediately able to emit an output is
included in this set.
Regarding the session type $\alpha.\sesst$, its clients are all those
processes that perform the co-action $\co\alpha$ and whose
continuation after $\alpha$ is in $\csem\sesst{}$. If $\alpha$ is some
input action $a$ then any process in $\csem{\alpha.\sesst}{}$ sends
$\co{a}$ (and only $\co{a}$), whereas if $\alpha$ is some output
action $\co a$ then any process in $\csem{\alpha.\sesst}{}$ guarantees
the input action $a$. For example we have $a \in
\csem{\co{a}.\cend}{}$ and $a + b \in \csem{\co{a}.\cend}{}$ but
$\co{a} \oplus \co{b} \nin \csem{a.\cend}{}$. Therefore, a server
using a channel typed by $\alpha.\sesst$ is required to provide action
$\alpha$ and to continue the interaction as specified by $\sesst$.
The intersection type $\sesst_1 \sand \sesst_2$ denotes those channels
that have both type $\sesst_1$ and type $\sesst_2$. Therefore the
servers using these channels have the freedom to use them according to
either $\sesst_1$ or $\sesst_2$. That is why the clients of $\sesst_1
\sand \sesst_2$ must be clients of both $\sesst_1$ and $\sesst_2$.
The union type $\sesst_1 \sor \sesst_2$ can be explained in a dual way
with respect to the intersection. In this case, the server is unsure
whether the channel has type $\sesst_1$ or $\sesst_2$ and consequently
it must be able to satisfy (at least) all the clients of $\sesst_1$
and all the clients of $\sesst_2$ as well.
Overall we see that intersections and unions of session types match in
a quite natural way their set-theoretic interpretation. However, note
that $\csem{\sesst_1 \sand \sesst_2} = \csem{\sesst_1} \cap
\csem{\sesst_2}$ whereas in general we have $\csem{\sesst_1 \sor
  \sesst_2} \supseteq \csem{\sesst_1} \cup \csem{\sesst_2}$.  For
example, $\co{a} \oplus \co{b} \in \csem{a.\cwin \sor b.\cwin}
\setminus (\csem{a.\cwin} \cup \csem{b.\cwin})$. There is no need to
use the closure operator on $\csem{\sesst_1} \cap \csem{\sesst_2}$
since it can be shown that this set is already closed.

We use $\csem\cdot{}$ for comparing session types. In particular we
say that $\sesst$ is a \emph{subtype} of $\varsesst$ when $\sesst$'s
clients are included in $\varsesst$'s clients:

\begin{definition}[subtype]
\label{def:subsession}
We say that $\sesst_1$ is a \emph{subtype} of $\sesst_2$, written
$\sesst_1 \subs \sesst_2$, if $\csem{\sesst_1} \subseteq
\csem{\sesst_2}$.
We write $\seqs$ for the equivalence relation induced by $\subs$,
namely ${\seqs} = {\subs} \cap {\subs}^{-1}$.
\eod
\end{definition}

Unlike the refinement relation, subtyping turns out to be a
pre-congruence with respect to all the operators of the session type
language.

\begin{proposition}
  $\subs$ is a pre-congruence.
\end{proposition}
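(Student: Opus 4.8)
The plan is to reduce pre-congruence to a one-step monotonicity property of each type constructor, and then to conclude by a routine induction on contexts. Recall that $\subs$ is defined purely semantically, as inclusion of interpretations (Definition~\ref{def:subsession}), and that by Definition~\ref{def:sesst} the interpretation $\csem{\cdot}$ is compositional: each constructor of the session type language is sent to a corresponding operation on sets of processes, namely set intersection for $\sand$, set union followed by the closure $\biorth{(\cdot)}$ for $\sor$, and the closure of a prefixed image for $\alpha.(\cdot)$. All of these set operations are monotone — for $\biorth{(\cdot)}$ this is exactly Proposition~\ref{prop:closure}(2) — so $\csem{\cdot}$ is monotone in each of its argument positions, which is precisely what pre-congruence demands.

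Concretely, I would establish three monotonicity statements, assuming throughout that $\csem{\sesst_1} \subseteq \csem{\sesst_2}$. For \emph{prefix}, the hypothesis gives $\{ \co\alpha.\process \mid \process \in \csem{\sesst_1} \} \subseteq \{ \co\alpha.\process \mid \process \in \csem{\sesst_2} \}$, whence monotonicity of $\biorth{(\cdot)}$ yields $\csem{\alpha.\sesst_1} \subseteq \csem{\alpha.\sesst_2}$, i.e.\ $\alpha.\sesst_1 \subs \alpha.\sesst_2$. For \emph{intersection}, monotonicity of $\cap$ gives $\csem{\sesst_1} \cap \csem{\varsesst} \subseteq \csem{\sesst_2} \cap \csem{\varsesst}$, i.e.\ $\sesst_1 \sand \varsesst \subs \sesst_2 \sand \varsesst$. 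For \emph{union}, monotonicity of $\cup$ together with monotonicity of $\biorth{(\cdot)}$ gives $\biorth{(\csem{\sesst_1} \cup \csem{\varsesst})} \subseteq \biorth{(\csem{\sesst_2} \cup \csem{\varsesst})}$, i.e.\ $\sesst_1 \sor \varsesst \subs \sesst_2 \sor \varsesst$. The symmetric versions for $\sand$ and $\sor$ follow from commutativity of $\cap$ and $\cup$. A straightforward induction on the structure of an arbitrary one-hole context $C[\cdot]$, with the hole as base case and these three lemmas covering the inductive cases, then shows that $\sesst_1 \subs \sesst_2$ implies $C[\sesst_1] \subs C[\sesst_2]$, which is the definition of pre-congruence.

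There is no genuine obstacle in the argument; the point worth stressing is \emph{why} it goes through for $\subs$ while the analogous refinement relation $\subc$ on processes fails to be a pre-congruence. The difference is structural. The process interpretation $\sem{\process}$ is fixed by orthogonality, and the syntactic external choice $+$ does not act monotonically on the associated orthogonal sets, as witnessed by the degeneracies discussed after Definition~\ref{def:refinement} (for instance $a.\co{a} + a.\co{b} \seqc a.(\co{a} \oplus \co{b})$ and the counterexample $a.b + b.d \nsubs a.b + b.c + b.d$). By contrast, the session type semantics is built by design out of monotone set operations closed under $\biorth{(\cdot)}$, so the monotonicity that fails operationally for $+$ holds definitionally for $\sor$. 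Thus the whole result is essentially a corollary of the compositionality of $\csem{\cdot}$ and of Proposition~\ref{prop:closure}.
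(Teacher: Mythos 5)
Your proof is correct and is essentially the paper's own argument spelled out in full: the paper dispatches the proposition as ``immediate from the definition of $\subs$ and Proposition~\ref{prop:closure}(2)'', and your case analysis (monotonicity of $\cap$, of $\cup$, and of $\biorth{(\cdot)}$ applied to the prefix and union constructors, followed by induction on contexts) is exactly the content of that remark.
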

\begin{proof}
  Immediate from the definition of $\subs$ and
  Proposition~\ref{prop:closure}(2).
\end{proof}

Equally trivial is the fact that $\sand$ and $\sor$ provide us with a
native way of respectively computing the greatest lower bound and the
least upper bound of two session types. As regards $\sand$, this is
obvious since $\csem{\sesst_1\sand\sesst_2} =
\csem{\sesst_1}\cap\csem{\sesst_2}$ by definition. For $\sor$, it
suffices to observe that $\sesst_1 \subs \varsesst$ and $\sesst_2
\subs \varsesst$ implies $\csem{\sesst_1} \cup \csem{\sesst_2}
\subseteq \csem{\varsesst}$. Since $\biorth{(\csem{\sesst_1} \cup
  \csem{\sesst_2})}$ is the smallest closed set that includes
$\csem{\sesst_1} \cup \csem{\sesst_2}$ and since $\csem{\varsesst}$ is
closed, we conclude $\csem{\sesst_1 \sor \sesst_2} =
\biorth{(\csem{\sesst_1} \cup \csem{\sesst_2})} \subseteq
\csem{\varsesst}$, namely $\sesst_1 \sor \sesst_2 \subs \varsesst$.
The following extended example shows the need to compute meets and
joins of session types in some contexts. The availability of native
unions and intersections within the language of session types makes
this task trivial.

\begin{example}[global type projection]
\label{ex:global_type}
  \emph{Global types}~\cite{HondaYoshidaCarbone08,BravettiZavattaro09}
  are abstract descriptions of interactions between two or more
  participants from a neutral point of view. For example, the global
  type
\[
  \mathtt{A} \lred{a} \mathtt{B}; \mathtt{A} \lred{b} \mathtt{B}
  \choice
  \mathtt{A} \lred{a} \mathtt{B}; \mathtt{A} \lred{c} \mathtt{B}
\]
specifies a system with two participants, here indicated by the tags
$\mathtt{A}$ and $\mathtt{B}$, which interact by exchanging messages
`$a$', `$b$', and `$c$'. In a global type, an action such as
$\mathtt{A} \lred{a} \mathtt{B}$ indicates that $\mathtt{A}$ sends an
`$a$' message to $\mathtt{B}$. Actions can be composed in sequences
(with $;$) and in alternative paths (with $\choice$). Overall, the
global type describes which sequences of interactions are possible,
but not who is responsible for which choices (hence the use of a
single operator $\choice$ in branching points).
The implementation of a global type begins by projecting it on each
participant, so as to synthesize the session type that each
participant must conform to. In this example we obtain the following
projections: the projection on $\mathtt{A}$ is $\co{a}.\co{b}$ on the
l.h.s. and $\co{a}.\co{c}$ on the r.h.s.; the projection on
$\mathtt{B}$ is $a.b$ on the l.h.s. and $a.c$ on the r.h.s. Since
$\mathtt{A}$ is the only sender, it is natural that its overall
projection is $\co{a}.\co{b} \sand \co{a}.\co{c} \seqs \co{a}.(\co{b}
\sand \co{c})$.
Since $\mathtt{B}$ is the only receiver, it must be prepared to
receive the messages from $\mathtt{A}$ regardless of which messages
$\mathtt{A}$ decides to send. Therefore, the correct projection of the
global type on $\mathtt{B}$ is $a.b \sor a.c \seqs a.(b \sor c)$,
which is the least upper bound of the projections on $\mathtt{B}$ of
the two branches.
In a language of session types with behavioral choices, this upper
bound must be computed by an \emph{ad hoc} operator, since $a.b + a.c$
would be equivalent to $a.(b \oplus c)$ which does not correspond to
the correct projection for $\mathtt{B}$.
\eoe
\end{example}

As we have anticipated, for a session type to make sense, its
interpretation must be different from both $\mbottom$ and $\mtop$.
This condition roughly corresponds to non-emptyness: a standard
``value'' type is inhabited if there exists one value of that type; a
session type is inhabited if it has at least one server and at least
one client. This explains why there are two distinct ``empty'' session
types.

\begin{definition}[viable session type]
\label{def:type_viability}
We say that the session type $\sesst$ is \emph{viable} if $\sesst
\nseqs \cbottom, \ctop$.
\eod
\end{definition}

Viability is a necessary and sufficient condition for $\sesst$ to be
implementable: if $\sesst \nseqs \cbottom$ take any $\process \in
\csem{\sesst}{}$. From the hypothesis $\sesst \nseqs \ctop$ and the
fact that $\csem{\sesst}$ is closed we also know that
$\orth{\csem{\sesst}{}} \ne \mbottom$, because $\orth{\csem\sesst{}} =
\mbottom$ implies $\biorth{\csem\sesst{}} = \mtop$. Hence there exists
$\varprocess \in \orth{\csem{\sesst}{}}$. By definition of orthogonal
set we conclude $\process \orthogonal \varprocess$.
This discussion about viability emphasizes the importance of the
orthogonal operation since the sets $\csem\sesst$ and
$\orth{\csem\sesst}$ contain precisely those processes that interact
correctly via a channel typed by $\sesst$. We conclude this section by
showing that the orthogonal operator over sets of processes
corresponds to a syntactic duality operation over session types.

\begin{theorem}[dual session type]
\label{thm:dual}
  The \emph{dual} of a session type $\sesst$ is the session type
  $\co\sesst$ obtained from $\sesst$ by turning every $\cbottom$ into
  $\ctop$, every $\ctop$ into $\cbottom$, every action $\alpha$ into
  the corresponding co-action $\co\alpha$, every $\sand$ into $\sor$,
  and every $\sor$ into $\sand$. Inductively:
\[
\begin{array}{rcl}
  \co\cbottom & = & \ctop \\
  \co\ctop & = & \cbottom \\
  \co\cwin & = & \cwin \\
  \co{\alpha.\sesst} & = & \co\alpha.\co{\sesst} \\
  \co{\sesst_1 \sand \sesst_2} & = & \co\sesst_1 \sor \co\sesst_2 \\
  \co{\sesst_1 \sor \sesst_2} & = & \co\sesst_1 \sand \co\sesst_2\\
\end{array}
\]
Then $\csem{\co\sesst} = \orth{\csem\sesst}$.
\end{theorem}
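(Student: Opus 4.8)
The plan is to prove the identity $\csem{\co\sesst} = \orth{\csem\sesst}$ by structural induction on $\sesst$, leaning on three elementary facts about the polarity $\orth{(\cdot)}$ that come essentially for free from Proposition~\ref{prop:closure} and its proof: (i) $\orth{(\cdot)}$ is antitone, (ii) $\orth{\biorth X} = \orth X$, so that every set of the form $\orth X$ is closed, and (iii) $\orth{(X \cup Y)} = \orth X \cap \orth Y$. Using the inductive definitions of $\co{(\cdot)}$ and of $\csem\cdot$, each case rewrites the two sides until they coincide, the real content being confined to the $\cwin$ and prefix cases.

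I would first dispatch the easy cases. For $\cbottom$ we have $\csem{\co\cbottom} = \csem\ctop = \procset$ and $\orth{\csem\cbottom} = \orth{\mbottom} = \procset$, since the empty set is contained in every interpretation; for $\ctop$ we have $\csem{\co\ctop} = \mbottom$ and $\orth{\csem\ctop} = \orth\procset = \mbottom$, the latter because no process is orthogonal to every process (already witnessed by $\orth{Y} = \emptyset$ for $Y = \{\co a, b\}$ together with antitonicity). For union, $\csem{\co{\sesst_1 \sor \sesst_2}} = \csem{\co\sesst_1} \cap \csem{\co\sesst_2}$, which by the induction hypothesis equals $\orth{\csem{\sesst_1}} \cap \orth{\csem{\sesst_2}} = \orth{(\csem{\sesst_1} \cup \csem{\sesst_2})}$ by (iii), and this equals $\orth{\biorth{(\csem{\sesst_1}\cup\csem{\sesst_2})}} = \orth{\csem{\sesst_1 \sor \sesst_2}}$ by (ii). The intersection case is dual: from closedness of the $\csem{\sesst_i}$ (and hence of $\csem{\sesst_1} \cap \csem{\sesst_2}$), (ii), (iii) and antitonicity one obtains the De Morgan identity $\biorth{(\orth{\csem{\sesst_1}} \cup \orth{\csem{\sesst_2}})} = \orth{(\csem{\sesst_1} \cap \csem{\sesst_2})}$, whose left-hand side is exactly $\csem{\co\sesst_1 \sor \co\sesst_2} = \csem{\co{\sesst_1 \sand \sesst_2}}$ after applying the induction hypothesis.

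The genuinely operational ingredient is a single rigidity lemma about rule~\rulename{r6}: if some unlabelled derivative $\hat\process$ of $\process$ can fire an output, $\hat\process \lred{\co\beta} \process''$, then $\process \wlred{} \co\beta.\process''$ and $\co\beta.\process''$ is a stable state that offers only $\co\beta$ and signals no $\win$. Hence a process whose every stable derivative must signal $\win$ can expose no output at all, and a process whose every stable derivative must offer a fixed action $\gamma$ can expose no output other than $\gamma$ itself. The first consequence settles the $\cwin$ case: since $\co\cwin = \cwin$ we must show $\csem\cwin = \orth{\csem\cwin}$, and as $\orth{\csem\cwin} = \orth{\biorth{\{\pwin\}}} = \orth{\{\pwin\}}$ by (ii), it suffices to prove $\biorth{\{\pwin\}} = \orth{\{\pwin\}}$; now $\orth{\{\pwin\}}$ consists exactly of the processes all of whose stable derivatives signal $\win$, and by the lemma such processes expose no output, so no two of them can ever synchronise and they are trivially orthogonal, giving $\orth{\{\pwin\}} \subseteq \biorth{\{\pwin\}}$ (the reverse inclusion is extensivity, since $\orth{\{\pwin\}}$ is closed and contains $\pwin$).

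For the prefix case, applying the induction hypothesis and (ii) reduces the goal to $\biorth A = \orth B$, equivalently $\orth A = \biorth B$, where $A = \{\alpha.\varprocess \mid \varprocess \in \orth{\csem\sesst}\}$ and $B = \{\co\alpha.\process \mid \process \in \csem\sesst\}$. The inclusion $B \subseteq \orth A$, hence $\biorth B \subseteq \orth A$, is a direct check: the only computation of $\alpha.\varprocess \parop \co\alpha.\process$ is the forced synchronisation to $\varprocess \parop \process$, with $\varprocess \orthogonal \process$. The hard inclusion $\orth A \subseteq \biorth B$ is where the rigidity lemma earns its keep, and is the main obstacle. I would first characterise membership operationally (using finiteness to guarantee that internal reductions always reach a stable state): $T \in \orth A$ iff every stable derivative of $T$ offers $\co\alpha$ and every $\co\alpha$-derivative of $T$ lies in $\biorth{\csem\sesst} = \csem\sesst$, while $R \in \orth B$ iff every stable derivative of $R$ offers $\alpha$ and every $\alpha$-derivative of $R$ lies in $\orth{\csem\sesst}$. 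By the rigidity lemma neither such $T$ nor such $R$ can expose any output beyond the single one dictated by its constraint, so the only synchronisation available in $T \parop R$ is the $\alpha$/$\co\alpha$ one; this is the crux, as it eliminates the spurious cross-synchronisations that would otherwise be uncontrollable. Every maximal computation of $T \parop R$ must therefore reach such a synchronisation — a stable configuration is impossible, since $T$ offers $\co\alpha$ and $R$ offers $\alpha$ — after which we arrive at $T' \parop R'$ with $T' \in \csem\sesst$ and $R' \in \orth{\csem\sesst}$, whence $T' \orthogonal R'$ and the remainder succeeds. Thus $T \orthogonal R$ for every $R \in \orth B$, i.e.\ $T \in \biorth B$, closing the induction.
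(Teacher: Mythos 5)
Your proof has the same architecture as the paper's: induction on $\sesst$, with the cases $\cbottom$, $\ctop$, $\sand$, $\sor$ dispatched by the lattice algebra of the polarity (antitonicity, $\orth{\biorth{X}} = \orth{X}$, De Morgan) and the operational content concentrated in the $\cwin$ and prefix cases. Your rigidity lemma about rule \rulename{r6} plays exactly the role of the paper's auxiliary operator $\gsem{\co\alpha}{\csem\sesst}$ and of Proposition~\ref{prop:gsem}, which states $\orth{\gsem\alpha{X}} = \gsem{\co\alpha}{\orth{X}}$; your $\cwin$ case is in fact more explicit than the paper's, which essentially just asserts $\orth{\csem\cend} = \csem\cend$.

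There is, however, a genuine gap in the prefix case: the two operational ``iff'' characterisations you state are false when $\sesst$ is not viable, and such $\sesst$ do occur as subterms in the induction. If $\orth{\csem\sesst} = \emptyset$ (e.g.\ in the prefix case for $a.\ctop$), then your set $A = \{\alpha.\varprocess \mid \varprocess \in \orth{\csem\sesst}\}$ is empty, so $\orth{A} = \procset$, yet your characterisation of $\orth{A}$ would force every stable derivative of every process to offer $\co\alpha$, which fails for $\pwin$; dually, if $\csem\sesst = \emptyset$ (e.g.\ for $a.\cbottom$), then $B = \{\co\alpha.\process \mid \process \in \csem\sesst\}$ is empty and $\orth{B} = \procset$, contradicting your characterisation of $\orth{B}$. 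The identity $\biorth{A} = \orth{B}$ does still hold in both situations: one of $A$, $B$ is empty, and the orthogonal of the other is shown to be empty by the very finiteness argument you use (a stable derivative exists, must offer the relevant action, and its continuation would have to land in an empty set). But as written, your hard inclusion takes arbitrary $T \in \orth{A}$ and $R \in \orth{B}$ and invokes both characterisations, i.e.\ it cites a lemma that is false in these cases; the degenerate cases must be split off first, which is precisely why the paper equips $\gsem\alpha{X}$ with a special clause when $\orth{X} = \emptyset$ and why the proof of Proposition~\ref{prop:gsem} distinguishes the three cases $X = \emptyset$, $\orth{X} = \emptyset$, and both nonempty before running the argument you give. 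A second, smaller omission: you repeatedly use closedness of $\csem\sesst$ (to rewrite $\biorth{\csem\sesst} = \csem\sesst$ in the prefix case and for the De Morgan step in the intersection case) but never establish it; it requires its own easy induction (the paper's Proposition~\ref{prop:closed}), or a strengthening of your induction (e.g.\ on depth rather than on subterms) so that the hypothesis also applies to $\co{\sesst_1}$ and $\co{\sesst_2}$, from which closedness follows by applying the theorem twice.
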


\subsection{Subtyping Algorithm}

\newcommand{\optional}[1]{\{{}#1\}}

In this section we define an algorithm for deciding the subtyping
relation. Since the interpretation of a session type is usually an
infinite set of processes, we cannot hope to derive a brute force
algorithm that is based directly on Definition~\ref{def:subsession}.
Fortunately, session types admit a particularly simple and intuitive
normal form. Therefore, we split the decision algorithm in two parts:
first we provide an effective procedure for rewriting every session
type into an equivalent normal form, which happens to be unique up to
commutativity and associativity of intersections and unions. Then, we
provide a syntax-directed algorithm that decides the subtyping
relation between session types in normal form.
In what follows we will use $n$-ary intersections and unions of the
form $\bigsand_{i\in\{1,\dots,n\}} \sesst_i$ and
$\bigsor_{i\in\{1,\dots,n\}} \sesst_i$ in place of
$\sesst_1\sand\cdots\sand\sesst_n$ and
$\sesst_1\sor\cdots\sor\sesst_n$, respectively; as usual, we let
$\bigsand_{i\in\emptyset} \sesst_i = \ctop$ and
$\bigsor_{i\in\emptyset} \sesst_i = \cbottom$ by definition.
We will also write $\sesst\optional{\sand\varsesst}_\phi$ to indicate
that the ${}\sand\varsesst$ part is present only when $\phi$ holds;
similarly for $\sesst\optional{\sor\varsesst}_\phi$.

\begin{definition}[normal form]
  We say that a session type $\sesst$ is in \emph{normal form} if
  either
\[
\sesst \equiv \bigsand_{a\in\rset{a}}
\co{a}.\sesst_a\optional{\sand \cend}_{\win\in\rset{a}}
\text{\qquad or\qquad}
\sesst \equiv \bigsor_{a\in\rset{a}} a.\sesst_a\optional{\sor
    \cend}_{\win\in\rset{a}}
\]
and $\sesst_a$ is viable and in normal form for every $a \in
\rset{a}$.
\end{definition}

A process using a channel whose associated session type is
$\bigsand_{a\in\rset{a}} \co{a}.\sesst_a\optional{\sand
  \cend}_{\win\in\rset{a}}$ may send any message $a\in\rset{a}$ and it
may decide to terminate if $\win\in\rset{a}$. After sending a message
$a$, the process must continue using the channel as specified by
$\sesst_a$.
In a dual fashion, a process using a channel whose associated session
type is $\bigsor_{a\in\rset{a}} a.\sesst_a\optional{\sor
  \cend}_{\win\in\rset{a}}$ must be ready to receive any message
$a\in\rset{a}$ and it must also be ready to terminate immediately if
no such message is received and $\win\in\rset{a}$. In case a message
$a$ is received, the process must continue using the channel as
specified by $\sesst_a$.

\begin{table}
\caption{\label{tab:simpl}\strut Simplification laws (symmetric and dual laws omitted).}
\framebox[\textwidth]{
\begin{math}
\displaystyle
\begin{array}{@{\qquad}c@{\qquad}}
\inferrule[\rulename{e-prefix}]{}{
  \alpha.\cbottom \seqa \cbottom
}
\qquad
\inferrule[\rulename{e-bottom}]{}{
  \cbottom \sand \sesst \seqa \cbottom
}
\qquad
\inferrule[\rulename{e-top}]{}{
  \ctop \sand \sesst \seqa \sesst
}
\qquad
\inferrule[\rulename{e-dist}]{}{
  \alpha.\sesst \sand \alpha.\varsesst \seqa \alpha.(\sesst \sand \varsesst)
}
\\\\
\inferrule[\rulename{e-input-end}]{
  \viable{\sesst_a}{}^{(a\in\rset{a})}
}{
  \big(\bigsor_{a\in\rset{a}} a.\sesst_a\big) \sand \cend \seqa \cbottom
}
\qquad
\inferrule[\rulename{e-input-output}]{
  \viable{\sesst_a}{}^{(a\in\rset{a})}
  \\
  \viable{\varsesst}
}{
  \big(\bigsor_{a\in\rset{a}} a.\sesst_a\big) \sand \co{b}.\varsesst
  \seqa
  \cbottom
}
\qquad
\inferrule[\rulename{e-input-output-end}]{
  \viable{\sesst_a}{}^{(a\in\rset{a})}
  \\
  \viable{\varsesst}
}{
  \big(\bigsor_{a\in\rset{a}} a.\sesst_a\sor\cend\big)
  \sand
  \co{b}.\varsesst
  \seqa
  \co{b}.\varsesst\sand\cend
}
\\\\
\inferrule[\rulename{e-input-input}]{}{
  \big(\bigsor_{a\in\rset{a}} a.\sesst_a\optional{\sor\cend}_{\win\in\rset{a}}\big)
  \sand
  \big(\bigsor_{b\in\rset{b}} b.\varsesst_b\optional{\sor\cend}_{\win\in\rset{b}}\big)
  \seqa
  \bigsor_{a\in\rset{a}\cap\rset{b}} a.(\sesst_a \sand \varsesst_a)
  \optional{\sor\cend}_{\win\in\rset{a}\cap\rset{b}} \\
}
\end{array}
\end{math}
}
\end{table}

The simplicity of normal forms is due to the fact that some behaviors
(like sending a message and receiving a message) are incompatible, in
the sense that their combination (intersection or union) yields
non-viable session types. Table~\ref{tab:simpl} presents a set of laws
that are used (from left to right) as basic simplification steps in
the computation of the normal form (symmetric and dual laws are
omitted).
Laws~\rulename{e-prefix}, \rulename{e-bottom}, and~\rulename{e-top}
state that non-viable types absorb prefixes and that $\cbottom$ and
$\ctop$ are respectively neutral for $\sor$ and $\sand$, as expected.
Law~\rulename{e-dist} shows that common actions can be factored while
preserving the combining operator. In particular, the dual law
$\alpha.\sesst \sor \alpha.\varsesst \seqs \alpha.(\sesst \sor
\varsesst)$ distinguishes subtyping from refinement and from the
\emph{must} pre-order, where the law $\alpha.\process +
\alpha.\varprocess \seqc \alpha.(\process \oplus \varprocess)$ holds.
Rules~\rulename{e-input-end} and~\rulename{e-input-output} show that
no client that sends a message $a\in\rset{a}$ can be satisfied by a
server that may decide to terminate the interaction or to send a
message. This is because the action of sending a message is
irrevocable (see rule~\rulename{r6} in the transition system of
processes).
Rule~\rulename{e-input-output-end} shows that among the clients that
either send a message $a\in\rset{a}$ or terminate are those that can
also receive message $b$.
Finally, rule~\rulename{e-input-input} shows that the clients of a
server will send only messages that can surely be received by the
server. For example, $(a \sor b \sor c) \sand (b \sor c \sor d) \seqs
b \sor c$. The dual law concerns messages that can be sent by the
server. Thus $(\co{a} \sand \co{b} \sand \co{c}) \sor (\co{b} \sand
\co{c} \sand \co{d}) \seqs \co{b} \sand \co{c}$: if the server is
unsure whether the type of the channel is $\co{a} \sand \co{b} \sand
\co{c}$ or $\co{b} \sand \co{c} \sand \co{d}$, then it can only send
those messages that can travel along the channel in both cases.

\begin{lemma}
\label{lem:simplify}
The laws in Table~\ref{tab:simpl} are sound.
\end{lemma}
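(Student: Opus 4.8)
The plan is to prove each simplification law sound, where "sound" means that both sides of every equation have the same semantics, i.e. $\csem{\sesst} = \csem{\varsesst}$ whenever the law states $\sesst \seqa \varsesst$. Since soundness of finitely many laws reduces to proving each one individually, I would treat them one at a time, grouping them by the kind of reasoning they require. The unifying strategy throughout is to unfold Definition~\ref{def:sesst} and then argue set equality of the resulting closed sets of processes, exploiting Proposition~\ref{prop:closure} and the characterization $\orth{X} = \{ \process \mid \forall \varprocess \in X : \process \orthogonal \varprocess \}$ from the proof of that proposition.

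\paragraph{The easy laws.} First I would dispatch the structural laws. For \rulename{e-bottom} and \rulename{e-top} the argument is immediate from the semantic clauses: $\csem{\cbottom \sand \sesst} = \mbottom \cap \csem{\sesst} = \mbottom = \csem{\cbottom}$, and $\csem{\ctop \sand \sesst} = \mtop \cap \csem{\sesst} = \csem{\sesst}$ since $\csem{\sesst} \subseteq \mtop = \procset$. For \rulename{e-prefix} I need $\csem{\alpha.\cbottom} = \csem{\cbottom} = \mbottom$; unfolding gives $\biorth{\{ \co\alpha.\process \mid \process \in \csem{\cbottom} \}} = \biorth{\emptyset}$, and since $\orth{\emptyset} = \procset$ we get $\biorth{\emptyset} = \orth{\procset} = \mbottom$ (no single process is orthogonal to everything, as noted in the viability discussion). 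For \rulename{e-dist}, the law $\alpha.\sesst \sand \alpha.\varsesst \seqa \alpha.(\sesst \sand \varsesst)$ requires showing two biorthogonal closures coincide; here the key observation is that a client of $\alpha.\sesst$ and of $\alpha.\varsesst$ simultaneously is exactly a client that performs $\co\alpha$ with continuation in $\csem{\sesst} \cap \csem{\varsesst}$, which after closure matches $\csem{\alpha.(\sesst \sand \varsesst)}$.

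\paragraph{The incompatibility laws.} The laws \rulename{e-input-end}, \rulename{e-input-output}, and \rulename{e-input-output-end} all hinge on the behavioral fact encoded by rule~\rulename{r6}: an output action preempts the branches of an external choice, so a client that sends a message is irrevocably committed. For \rulename{e-input-end} I would argue that no process can be simultaneously orthogonal to a client that must send some $\co a$ (a client of $\bigsor_a a.\sesst_a$) and to $\pwin$ (the witnessing client of $\cend$), because the server would have to both wait for a message and be in a terminated state; viability of the $\sesst_a$ guarantees such clients genuinely exist, so the intersection collapses to $\mbottom = \csem{\cbottom}$. The \rulename{e-input-output} case is similar: a server meeting the input-clients must produce some output $\co a$, but a server meeting the output-client $\co b.\varsesst$ must instead receive $b$, and these are incompatible, forcing $\csem{\cbottom}$. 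For \rulename{e-input-output-end} the reasoning is finer: the presence of the $\cend$ summand on the left lets the server legitimately terminate against the input-or-terminate clients, so the surviving constraint is exactly that it also handle $\co b.\varsesst$, yielding $\co b.\varsesst \sand \cend$; I would verify this by a double inclusion on the client sets.

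\paragraph{The factoring law and the main obstacle.} The law \rulename{e-input-input} is the substantive one, and I expect it to be the main obstacle. It states that intersecting two unions of inputs keeps only the common initial labels: $\rset{a} \cap \rset{b}$, with matched continuations intersected. The forward inclusion is the delicate direction: I must show that any client orthogonal to a server realizing both the left union and the right union can only ever send messages in $\rset{a} \cap \rset{b}$. The intuition is that a server choosing to behave according to the $\rset{a}$-union will accept precisely the inputs indexed by $\rset{a}$, while one choosing the $\rset{b}$-union accepts those indexed by $\rset{b}$; since the client cannot know which the server picked, it is safe only on the intersection. Formalizing this requires a careful analysis of the orthogonality reductions $\parop$, tracking which synchronizations are possible and showing that a client attempting to send a label outside $\rset{a} \cap \rset{b}$ can be driven to a stuck, non-$\win$ configuration. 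I would reduce everything to the normal-form shape, use the inductive viability of the continuations $\sesst_a$ and $\varsesst_a$ to supply concrete witnessing clients, and appeal to the \rulename{end} summand bookkeeping via the $\optional{\sor\cend}$ notation to handle the termination option uniformly. The closure operator must be threaded through with Proposition~\ref{prop:closure} at each step, since both sides are defined as biorthogonals.
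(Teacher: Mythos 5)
Your handling of \rulename{e-bottom}, \rulename{e-top}, \rulename{e-prefix}, and the outline for \rulename{e-dist} are fine and essentially the paper's own computations. The genuine gap is in your ``incompatibility laws'' paragraph: you argue on the wrong side of the orthogonality relation. The law \rulename{e-input-end} is the statement $\csem{\big(\bigsor_{a\in\rset{a}} a.\sesst_a\big) \sand \cend} = \csem{\bigsor_{a\in\rset{a}} a.\sesst_a} \cap \csem{\cend} = \emptyset$, i.e.\ emptiness of an intersection of \emph{client} sets, whereas what you set out to prove --- ``no process can be simultaneously orthogonal to a client that must send some $\co a$ and to $\pwin$'' --- is emptiness of the set of common \emph{servers}, $\orth{\csem{\bigsor_{a\in\rset{a}} a.\sesst_a}} \cap \orth{\csem\cend}$. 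By Proposition~\ref{prop:orth}(5) and Theorem~\ref{thm:dual} the latter is a statement about the dual types (equivalently, $\big(\bigsor_{a\in\rset{a}} a.\sesst_a\big) \sor \cend \seqs \ctop$), and neither emptiness implies the other in general: $\ctop$ and $\ctop$ have no common server, yet every process is a common client. Worse, your server-side claim is simply false: take $\rset{a}=\{a\}$ and $\sesst_a = \cend$ (which is viable), so that $\co a.\pwin$ is a client of $a.\cend$; then the process $a.\pwin + \pwin$ is orthogonal \emph{both} to $\co a.\pwin$ and to $\pwin$. An external choice can offer an input and successful termination simultaneously --- this is precisely why $a \sor \cend$ is a viable type, why normal forms carry an optional $\cend$ summand, and why \rulename{e-input-output-end} yields $\co b.\varsesst \sand \cend$ rather than $\cbottom$. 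Indeed your own paragraph on \rulename{e-input-output-end} (``the $\cend$ summand lets the server legitimately terminate'') appeals to exactly the possibility that your \rulename{e-input-end} argument declares impossible.

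The correct argument, and the paper's, derives the contradiction from the properties of a hypothetical common \emph{client} $\process \in \csem{\bigsor_{a\in\rset{a}} a.\sesst_a}\cap\csem\cend$: the first membership forces $\process\mayconverge$ with every may-action of $\process$ lying in $\coset{\rset{a}}$, so $\process$ may emit some $\co a$; the second forces $\process\must\win$; and $\process\may\co a$ is incompatible with $\process\must\win$ because, by rule \rulename{r6}, the ability to output entails an internal transition to an irrevocably committed state $\co a.\process'$ from which $\win$ is unreachable. To make this reasoning available uniformly --- including for \rulename{e-input-input}, where your proposal retreats to an unspecified ``careful analysis of the orthogonality reductions'' --- the paper first proves a characterization lemma (Lemma~\ref{lem:process_nf}): $\process$ is a client of a union-of-inputs normal form iff $\process\mayconverge$, every $\mu$ with $\process\may\mu$ lies in $\coset{\rset{a}}$, and $\process\may\co{a}$ implies $\process(\co{a})\in\csem{\sesst_a}$; dually, with $\must$, for intersections of outputs. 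With that lemma each incompatibility law is a few lines of may/must bookkeeping, and \rulename{e-input-input} reduces to computing the may-sets and continuations of the common clients; without it, or an equivalent, your plan for the hard law has no concrete engine. So the proposal needs both the client/server direction repaired and a characterization of client sets in the style of Lemma~\ref{lem:process_nf}.
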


The simplification laws, and the axiomatization of $\subs$ that we are
about to present, would be simpler if one could prove that $\sand$ and
$\sor$ distribute over each other. We conjecture that the lattice of
closed sets of processes ordered by set inclusion is indeed
distributive (in the process language, the internal and external
choices distribute over each other), but the proof appears to be
non-trivial.

\begin{lemma}[normal form]
\label{lem:nf}
For every session type $\sesst$ there exists $\varsesst$ in normal
form such that $\sesst \seqs \varsesst$.
\end{lemma}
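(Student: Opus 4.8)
The plan is to turn $\sesst$ into a normal form by rewriting it with the simplification laws of Table~\ref{tab:simpl}, which are sound by Lemma~\ref{lem:simplify}, together with their (announced but omitted) symmetric and dual instances and the neutrality/absorption laws for $\cbottom$ and $\ctop$. Since $\subs$ is a precongruence, I may rewrite inside an arbitrary context, so the argument can be made compositional: I normalise the immediate subterms first and then combine the results. I would run the whole induction on the lexicographically ordered pair $(\mathsf{depth}(\sesst),\mathsf{size}(\sesst))$, where $\mathsf{depth}$ is the maximal nesting of prefixes and $\mathsf{size}$ counts the $\sand$ and $\sor$ nodes. The point of this measure is that every law in Table~\ref{tab:simpl} is depth-non-increasing, and the only two laws that push a combinator below a prefix, namely \rulename{e-dist} and \rulename{e-input-input}, place it strictly below one prefix, hence strictly decrease the depth of the subterm that must be renormalised.

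The base and easy cases are immediate. The constants $\cbottom$ and $\ctop$ are already normal forms (the empty union and the empty intersection), and $\cwin$ is the degenerate form carrying only the $\cend$ summand. For a prefix $\alpha.\sesst$ I obtain by induction a normal form $\varsesst \seqs \sesst$ of no greater depth; if $\varsesst$ is viable then $\alpha.\varsesst$ is itself a normal form — a singleton intersection when $\alpha$ is an output, a singleton union when $\alpha$ is an input — and if $\varsesst$ is non-viable then $\alpha.\varsesst \seqs \cbottom$ or $\alpha.\varsesst \seqs \ctop$ by \rulename{e-prefix} and its dual.

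The core of the argument is the case $\sesst_1 \sand \sesst_2$ (the case $\sesst_1 \sor \sesst_2$ being dual). By the induction hypothesis I replace each $\sesst_i$ by a normal form $\varsesst_i$ and then reduce the intersection $\varsesst_1 \sand \varsesst_2$ of two normal forms to a single normal form by case analysis on their top-level shapes. If either operand is $\cbottom$ or $\ctop$ I conclude with \rulename{e-bottom}/\rulename{e-top}. If both are intersection (output) forms, I merge the two big intersections and use \rulename{e-dist} to factor each common output $\co{a}$; this produces continuations $\sesst_a \sand \varsesst_a$ of strictly smaller depth, which I renormalise by the induction hypothesis, discarding any branch whose continuation turns out non-viable via \rulename{e-prefix} and the neutrality laws. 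If both are union (input) forms, \rulename{e-input-input} collapses the intersection to a union ranging over the common labels, again leaving depth-decreasing continuations $\sesst_a \sand \varsesst_a$ to renormalise. Finally, if one operand is an input form and the other carries an output or is $\cend$, the three mixed-shape laws \rulename{e-input-end}, \rulename{e-input-output}, and \rulename{e-input-output-end} collapse the intersection to $\cbottom$ or to a single output form, with no continuation left to renormalise; their viability side conditions are met precisely because the operands are already normal, so their continuations are viable.

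The main obstacle is this combination step, where all the laws are consumed and where termination must be argued. Two things require care. First, I must check that the case analysis is exhaustive: once the operands are normal, each is at top level $\cbottom$, $\ctop$, an intersection of outputs optionally extended with $\cend$, or a union of inputs optionally extended with $\cend$, so an intersection of two normal forms always falls into one of the four cases above, the mixed input/output case being exactly the one the three mixed-shape laws are designed to absorb (note that an input form met with a \emph{multi}-output form collapses to $\cbottom$ by applying \rulename{e-input-output} to one output and then \rulename{e-bottom}). Second, I must ensure that discarding non-viable continuations never leaves a term outside the normal-form grammar, which holds because $\cbottom$ is neutral for $\sor$ and absorbing for $\sand$, and dually for $\ctop$. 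Granting that Table~\ref{tab:simpl} is read up to its symmetric and dual instances and the obvious neutrality and idempotence laws, the lexicographic measure guarantees termination, while Lemma~\ref{lem:simplify} guarantees $\sesst \seqs \varsesst$ is preserved at every step.
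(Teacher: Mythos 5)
Your proposal is correct and follows essentially the same route as the paper's proof: induction on depth, normalising the subterms first, then combining two normal forms by case analysis on their top-level shapes using the laws of Table~\ref{tab:simpl}, with the key observation that \rulename{e-dist} and \rulename{e-input-input} leave only strictly shallower continuations to renormalise. Your additional bookkeeping (the lexicographic measure and the explicit treatment of non-viable continuations via absorption/neutrality of $\cbottom$ and $\ctop$) is a minor refinement of details the paper's proof handles implicitly.
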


\begin{table}
\caption{\label{tab:axioms}\strut Subtyping algorithm.}
\framebox[\textwidth]{
\begin{math}
\displaystyle
\begin{array}{@{\qquad}c@{\qquad}}
\inferrule[\rulename{s-bottom}]{}{
  \cbottom
  \suba
  \bigsand_{a\in\rset{a}} \co{a}.\sesst_a\optional{\sand\cend}_{\win\in\rset{a}}
}
\qquad
\inferrule[\rulename{s-top}]{}{
  \bigsor_{a\in\rset{a}} a.\sesst_a\optional{\sor\cend}_{\win\in\rset{a}}
  \suba
  \ctop
}
\qquad
\inferrule[\rulename{s-end}]{}{
  \bigsand_{a\in\rset{a}} \co{a}.\sesst_a\sand\cend
  \suba
  \bigsor_{b\in\rset{b}} b.\varsesst_b\sor\cend
}
\\\\
  \inferrule[\rulename{s-input}]{
    \rset{a} \subseteq \rset{b}
    \\
    \sesst_a \suba \varsesst_a{}^{(a\in\rset{a})}
  }{
    \bigsor_{a\in\rset{a}} a.\sesst_a\optional{\sor\cend}_{\win\in\rset{a}}
    \suba
    \bigsor_{b\in\rset{b}} b.\varsesst_b\optional{\sor\cend}_{\win\in\rset{b}}
  }
\qquad
\inferrule[\rulename{s-output}]{
  \rset{b} \subseteq \rset{a}
  \\
  \sesst_a \suba \varsesst_a{}^{(a\in\rset{b})}
}{
  \bigsand_{a\in\rset{a}} \co{a}.\sesst_a\optional{\sand\cend}_{\win\in\rset{a}}
  \suba
  \bigsand_{b\in\rset{b}} \co{b}.\varsesst_b\optional{\sand\cend}_{\win\in\rset{b}}
}
\end{array}
\end{math}
}
\end{table}

The proof of the normal form lemma is constructive and provides an
effective procedure for rewriting every session type in its normal
form using the laws in Table~\ref{tab:simpl}.
What remains to do now is to provide the subtyping algorithm for
session types in normal form.

\begin{definition}[algorithmic subtyping]
  Let $\suba$ be the least relation defined by axioms and rules in
  Table~\ref{tab:axioms}.
\end{definition}

Because of the interpretation of $\sand$ and $\sor$ as respectively
intersections and unions, the algorithm looks embarrassingly obvious
although it states well-known properties of channel types. In
particular, rule~\rulename{s-input} states that it is safe to replace
a channel $\channel$ having some input capabilities ($\rset{b}$) with
another one $\varchannel$ having fewer input capabilities ($\rset{a}
\subseteq \rset{b}$), because any process originally using $\channel$
will be ready to handle any message $b\in\rset{b}$.
Dually, rule~\rulename{s-output} states that is safe to replace a
channel $\channel$ having some output capabilities ($\rset{b}$) with
another one $\varchannel$ having greater output capabilities
($\rset{a} \supseteq \rset{b}$), since the process originally using
$\channel$ will exercise on $\varchannel$ only a subset of the
capabilities allowed on it.
Observe that~\rulename{s-output} and~\rulename{s-input} are just
specializations of the well-known laws $\sesst \sand \varsesst \suba
\sesst$ and $\sesst \suba \sesst \sor \varsesst$ concerning
intersection and union types.
Rules~\rulename{s-bottom} and~\rulename{s-top} state obvious facts
about $\cbottom$ and $\ctop$ being the smallest and the largest
session types, respectively. Observe that rule~\rulename{s-input} is
the counterpart of rule~\rulename{s-bottom} when $\rset{a} =
\emptyset$ and the larger session type is a union. Dually, the
rule~\rulename{s-output} is the counterpart of rule~\rulename{s-top}
when $\rset{b} = \emptyset$ and the smallest session type is an
intersection.
Rule~\rulename{s-end} is required for the algorithm to be complete: it
basically states the reflexivity of $\suba$ on $\cend$.

The subtyping algorithm is correct and complete with respect to the
set of session types in normal form:

\begin{theorem}
\label{thm:algorithm}
  Let $\sesst$ and $\varsesst$ be in normal form. Then $\sesst \subs
  \varsesst$ if and only if $\sesst \suba \varsesst$.
\end{theorem}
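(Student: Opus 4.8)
The plan is to funnel both directions through a single \emph{characteristic client}. For each viable normal form I define a canonical process $c_\sesst$ by recursion on its structure. If $\sesst \equiv \bigsand_{a\in\rset{a}}\co{a}.\sesst_a\optional{\sand\cend}_{\win\in\rset{a}}$, then $c_\sesst$ is the external choice $\sum_{a\in\rset{a}} a.c_{\sesst_a}\optional{{}+\pwin}_{\win\in\rset{a}}$, which offers an input for each message the server may emit and may stop exactly when $\win\in\rset{a}$; dually, if $\sesst \equiv \bigsor_{a\in\rset{a}} a.\sesst_a\optional{\sor\cend}_{\win\in\rset{a}}$, then $c_\sesst \eqdef \bigoplus_{a\in\rset{a}}\co{a}.c_{\sesst_a}\optional{\oplus\pwin}_{\win\in\rset{a}}$ is the internal choice of the corresponding outputs. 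In particular $c_\cend = \pwin$, in accordance with $\csem\cend = \biorth{\{\pwin\}}$. The cornerstone, to be proved on its own, is the \emph{principality lemma}
\[
\csem\sesst = \biorth{\{c_\sesst\}}\qquad\text{for every viable normal form }\sesst .
\]

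Granting this lemma, the theorem becomes a short calculation. Since $\csem\varsesst$ is closed and $\biorth{(\cdot)}$ is monotone (Proposition~\ref{prop:closure}), $\csem\sesst = \biorth{\{c_\sesst\}} \subseteq \csem\varsesst$ holds iff $c_\sesst \in \csem\varsesst$. Using Theorem~\ref{thm:dual} to rewrite $\orth{\csem\varsesst} = \csem{\co\varsesst}$, the principality lemma applied to $\co\varsesst$ (again a viable normal form), and the polarity identity $\orth{\biorth{X}} = \orth{X}$, one gets $\csem\varsesst = \orth{\csem{\co\varsesst}} = \orth{\{c_{\co\varsesst}\}} = \sem{c_{\co\varsesst}}$, whence $c_\sesst \in \csem\varsesst$ iff $c_\sesst \orthogonal c_{\co\varsesst}$. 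Thus, for viable $\sesst,\varsesst$, we have reduced the statement to
\[
\sesst \subs \varsesst \iff c_\sesst \orthogonal c_{\co\varsesst}\iff \sesst \suba \varsesst,
\]
where the non-viable cases $\sesst \equiv \cbottom$ and $\varsesst \equiv \ctop$ are dispatched directly (both make $\subs$ and $\suba$ hold trivially, via \rulename{s-bottom}, \rulename{s-top}, and the empty instances of \rulename{s-input}/\rulename{s-output}).

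It then remains to prove $c_\sesst \orthogonal c_{\co\varsesst} \iff \sesst \suba \varsesst$ by induction on the structure of the two normal forms, simply by computing the reductions of the system $c_\sesst \parop c_{\co\varsesst}$ and reading off the side-conditions of Table~\ref{tab:axioms}. When $\sesst$ and $\varsesst$ are both output-intersections, $c_{\co\varsesst}$ internally selects some $\co{b}$; orthogonality forces every such choice to be matched by an input branch of the receiver $c_\sesst$, i.e. $\rset{b} \subseteq \rset{a}$, and to continue with $c_{\sesst_b}\orthogonal c_{\co{\varsesst_b}}$, which by the induction hypothesis is exactly the premise $\sesst_b \suba \varsesst_b$ of \rulename{s-output}; the $\cend$ bookkeeping ($\win\in\rset{b} \Rightarrow \win\in\rset{a}$) falls out of the terminate-branch of the internal choice. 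The input-union case is dual and yields \rulename{s-input}. When both carry $\cend$ but have opposite polarity, $c_\sesst$ and $c_{\co\varsesst}$ are both receive-or-terminate processes that can only synchronise on $\win$, so they are orthogonal precisely as sanctioned by \rulename{s-end}; and whenever the polarities clash with a missing $\cend$, one residual fails to offer $\win$ while the other cannot, or (if both can only send, by rule~\rulename{r6}) the system deadlocks in a non-$\win$ stable state, so $c_\sesst \not\orthogonal c_{\co\varsesst}$ and correspondingly no algorithmic rule applies.

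The main obstacle is the principality lemma $\csem\sesst = \biorth{\{c_\sesst\}}$. The inclusion $\biorth{\{c_\sesst\}} \subseteq \csem\sesst$ only needs $c_\sesst \in \csem\sesst$, which is a routine operational check that $c_\sesst$ is orthogonal to every server of $\sesst$. The reverse inclusion $\csem\sesst \subseteq \biorth{\{c_\sesst\}}$ is the delicate part: equivalently $\orth{\csem\sesst}\subseteq\orth{\{c_\sesst\}}$, i.e. every server that already passes the single client $c_\sesst$ passes \emph{all} clients of $\sesst$, so that $c_\sesst$ is a hardest client. I expect the difficulty to concentrate in the union case, where the closure genuinely enlarges $\csem{\sesst_1}\cup\csem{\sesst_2}$ beyond the literal union and the witness must be the \emph{internal} choice of the component witnesses, and in the interaction between $\cend$ and output prefixes governed by rule~\rulename{r6} --- precisely the phenomenon isolated by the simplification law \rulename{e-input-output-end}. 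I would prove this inclusion by induction on $\sesst$, classifying the moves of an arbitrary server in $\orth{\{c_\sesst\}}$ against the top-level shape of $\sesst$ and appealing to Definition~\ref{def:sesst} for the continuations.
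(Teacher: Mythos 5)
Your route is genuinely different from the paper's, and its skeleton is correct. The paper never isolates a characteristic client: it proves an extensional may/must characterization of $\csem\sesst$ for normal forms (Lemma~\ref{lem:process_nf}) and then derives the ``only if'' direction by induction on depth, reading the side conditions of each rule of Table~\ref{tab:axioms} off that characterization, while the ``if'' direction is dismissed as soundness of obvious laws. Your reduction is valid as stated: the chain $\csem\sesst \subseteq \csem\varsesst \iff c_\sesst \in \csem\varsesst \iff c_\sesst \orthogonal c_{\co\varsesst}$ uses only closedness of interpretations, monotonicity, the identity $\orth{\biorth{X}} = \orth{X}$, Theorem~\ref{thm:dual}, and the (easily checked, but worth a line) facts that duals of viable normal forms are viable normal forms and that closed sets are closed under $\oplus$. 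Your concluding case analysis of $c_\sesst \parop c_{\co\varsesst}$ also matches the algorithm rule for rule, including the premise-free \rulename{s-end} case, where the two characteristic processes never synchronize and both flag $\win$, and the polarity-clash cases, where no synchronization is possible and one side cannot flag $\win$. Compared with the paper, this buys a pleasant by-product: both soundness and completeness of $\suba$ come out of a single computation, and decidability is transparent since orthogonality of two finite processes is decidable by exhaustive exploration.

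The genuine gap is the one you name yourself: the principality lemma $\csem\sesst = \biorth{\{c_\sesst\}}$ is stated, applied twice, and never proved, and it is not a side condition --- it is where the entire content of the theorem lives. Its hard half says that any process passing the \emph{single} test $c_\sesst$ is a correct server for \emph{every} client of $\sesst$; note, incidentally, that your displayed ``equivalent'' form $\orth{\csem\sesst} \subseteq \orth{\{c_\sesst\}}$ is the wrong (trivial) inclusion --- antitonicity of $\orth{(\cdot)}$ turns the delicate inclusion into $\orth{\{c_\sesst\}} \subseteq \orth{\csem\sesst}$, which is what your prose correctly describes. Proving that half requires exactly the analysis the paper packages as Lemma~\ref{lem:process_nf}: unfolding the closure $\biorth{(\cdot)}$ in the union case, controlling which visible actions a process orthogonal to $c_\sesst$ may or must perform (rule \rulename{r6} and the interplay of outputs with $\win$ do real work here, cf.\ law \rulename{e-input-output-end}), and pushing continuation conditions through the induction. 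So, as written, you have correctly reduced the theorem, modulo routine bookkeeping, to a statement essentially as hard as the lemma the paper proves in its appendix; until that lemma is discharged, the proposal is a reformulation rather than a proof. A smaller loose end: the degenerate cases need all four sides, i.e.\ not only that $\cbottom \subs \varsesst$ and $\sesst \subs \ctop$ hold together with their algorithmic counterparts, but also that $\ctop \subs \varsesst$ and $\sesst \subs \cbottom$ fail for viable $\sesst,\varsesst$ exactly when no rule (including the empty instances of \rulename{s-input}/\rulename{s-output}) applies.
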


\subsection{Type Checking}
\label{sec:checker}

\begin{table}
\caption{\label{tab:checker}\strut Type checking rules.}
\framebox[\textwidth]{
\begin{math}
\displaystyle
\begin{array}{c}
\inferrule[\rulename{t-nil}\hspace{-1em}]{\ }{
  \cbottom \vdash \pnull
}
\qquad
\inferrule[\rulename{t-end}]{\ }{
  \cend \vdash \pwin
}
\qquad
\inferrule[\rulename{t-send}]{
  \sesst \vdash \process
}{
  \co a.\sesst \vdash \co a.\process
}
\qquad
\inferrule[\rulename{t-receive}]{
  \sesst_{a_i} \vdash \process_i~{}^{(i\in I)}
}{
  \bigsor_{i\in I} a_i.\sesst_{a_i} \vdash \sum_{i\in I} a_i.\process_i
}
\qquad
\inferrule[\rulename{t-choice}]{
  \sesst \vdash \process
  \\
  \sesst \vdash \varprocess
}{
  \sesst \vdash \process \oplus \varprocess
}
\qquad
\inferrule[\rulename{t-sub}]{
  \sesst \vdash \process
  \\
  \varsesst \subs \sesst
}{
  \varsesst \vdash \process
}
\end{array}
\end{math}
}
\end{table}

We conclude with the definition of a type checker to derive judgments
of the form $\sesst \vdash \process$ meaning that $\process$ is a
well-typed process using a channel with type $\sesst$.  The type
checker is defined by the axioms and rules in Table~\ref{tab:checker}.
We abbreviate $a_1.\process_1 + \cdots + a_n.\process_n$ with
$\sum_{i\in\{1,\dots,n\}} a_i.\process_i$.

Because of the similarities between processes and session types, at
first sight the type checker looks as stating a trivial correspondence
between the two languages, but there are some lurking subtleties.
Rules~\rulename{t-nil}, \rulename{t-end}, and~\rulename{t-send} are
indeed fairly obvious: the deadlocked server $\pnull$ can only use a
channel typed by $\cbottom$ since no client can interact with it; the
terminated server $\pwin$ can use a channel typed by $\cend$ since it
has successfully ended any interaction; the server $\co{a}.\process$
sending a message $a$ can use a channel typed by $\co{a}.\sesst$ if
the continuation $\process$ uses the channel according to $\sesst$.
Rule~\rulename{t-receive} concerns servers waiting for a message from
the set $\{ a_i \mid i \in I \}$. Intuitively, these servers can use
channels typed by $\bigvee_{i\in I} a_i.\sesst_i$ where each
continuation $\process_i$ is well typed with respect to
$\sesst_i$. However, there is the possibility that two branches of the
server are guarded by the same input action. Namely, it may be the
case that $a_i = a_j$ for some $i,j\in I$ such that $i\ne j$. As we
know, this translates into the server performing an internal choice on
how to handle such a message, nondeterministically choosing between
the continuations $\process_i$ and $\process_j$. Had we typed the
server with respect to $\bigvee_{i\in I} a_i.\sesst_i$, we would be
stating that the server is capable of dealing with all the clients in
the sets $\csem{\sesst_i \sor \sesst_j}$, which is not necessarily the
case. Therefore, in order for this typing rule to be sound, we require
that the continuations $\process_i$ and $\process_j$ of different
branches guarded by the same input action $a_i = a_j$ must be typable
with respect to the same type $\sesst_{a_i} = \sesst_{a_j}$. This way,
no matter which continuation is selected, it will be well typed.
Rule~\rulename{t-choice} presents a similar problem, since the server
$\process \oplus \varprocess$ may independently reduce to either
$\process$ or $\varprocess$. Therefore, we require both choices to be
typable with respect to the same session type $\sesst$. The attentive
reader will have noticed a close relationship between this typing rule
and standard type preservation results stating that (internal)
reductions preserve the type: in this case, from the hypotheses
$\sesst \vdash \process \oplus \varprocess$ and either $\process
\oplus \varprocess \lred{} \process$ or $\process \oplus \varprocess
\lred{} \varprocess$ we easily deduce that the residual process is
still well typed with respect to $\sesst$.
The last rule~\rulename{t-sub} is a standard subsumption rule, except
that it deals with the type of the (implicit) channel used by the
process and not with the type of the process itself. It states that if
a process is well typed with respect to some session type $\sesst$,
then it is also well typed with respect to a smaller session type
$\varsesst$.  This is consistent with the intuition that it is safe to
replace a value (in this case, a channel) with another one having a
smaller type.

\begin{example}
  In the two derivations that follow, rule~\rulename{t-sub} is
  essential for rules~\rulename{t-receive} and~\rulename{t-choice} to
  be applicable.

\def\ScoreOverhang{0pt}
\begin{tabular}{@{}c@{\qquad\qquad}c@{}}
\begin{minipage}[b]{0.4\textwidth}
\begin{prooftree}
  \AxiomC{$\cend \vdash \pwin$}
  \UnaryInfC{$
    \co{a} \vdash \co{a}
  $}
  \AxiomC{$\co{a} \sand \co{b} \subs \co{b}$}
  \BinaryInfC{$
    \co{a} \sand \co{b} \vdash \co{a}
  $}
  \AxiomC{$\cend \vdash \pwin$}
  \UnaryInfC{$
    \co{b} \vdash \co{b}
  $}
  \AxiomC{$\co{a} \sand \co{b} \subs \co{b}$}
  \BinaryInfC{$
    \co{a} \sand \co{b} \vdash \co{b}
  $}
  \BinaryInfC{$
    a.(\co{a} \sand \co{b}) \vdash a.\co{a} + a.\co{b}
  $}
\end{prooftree}
\end{minipage}
&
\begin{minipage}[b]{0.4\textwidth}
\begin{prooftree}
  \AxiomC{$\cend \vdash \pwin$}
  \UnaryInfC{$
    \co{a} \vdash \co{a}
  $}
  \AxiomC{$\co{a} \sand \co{b} \subs \co{b}$}
  \BinaryInfC{$
    \co{a} \sand \co{b} \vdash \co{a}
  $}
  \AxiomC{$\cend \vdash \pwin$}
  \UnaryInfC{$
    \co{b} \vdash \co{b}
  $}
  \AxiomC{$\co{a} \sand \co{b} \subs \co{b}$}
  \BinaryInfC{$
    \co{a} \sand \co{b} \vdash \co{b}
  $}
  \BinaryInfC{$
    \co{a} \sand \co{b} \vdash \co{a} \oplus \co{b}
  $}
  \UnaryInfC{$
    a.(\co{a} \sand \co{b}) \vdash a.(\co{a} \oplus \co{b})
  $}
\end{prooftree}
\end{minipage}
\end{tabular}

The fact that the two processes $a.\co{a} + a.\co{b}$ and $a.(\co{a}
\oplus \co{b})$ are well typed with respect to the same type
$a.(\co{a} \sand \co{b})$ provides further evidence that they are
equivalent, as informally argued in Section~\ref{sec:intro}.
\eoe
\end{example}

We conclude our study with a soundness result for the type system. If
two processes are typed by dual session types, then they are
orthogonal.

\begin{theorem}
\label{thm:checker}
If $\sesst \vdash \process$ and $\co\sesst \vdash \varprocess$, then
$\process \orthogonal \varprocess$.
\end{theorem}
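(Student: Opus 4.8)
The plan is to reduce the theorem to a single \emph{soundness lemma} for the type system, namely: if $\sesst \vdash \process$ then $\process \in \orth{\csem\sesst}$. Intuitively this says that a well-typed server lies in the orthogonal of the set of clients that its channel type prescribes. Granting this lemma, the theorem follows quickly. From $\sesst \vdash \process$ we get $\process \in \orth{\csem\sesst}$. Applying the same lemma to $\co\sesst \vdash \varprocess$ gives $\varprocess \in \orth{\csem{\co\sesst}}$; now Theorem~\ref{thm:dual}, applied to $\co\sesst$ and using that the dual operation is an involution ($\co{\co\sesst} = \sesst$), yields $\orth{\csem{\co\sesst}} = \csem{\co{\co\sesst}} = \csem\sesst$, so in fact $\varprocess \in \csem\sesst$. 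Since $\process \in \orth{\csem\sesst}$ and $\varprocess \in \csem\sesst$, the characterisation of the orthogonal set recorded in the proof of Proposition~\ref{prop:closure} (where $\orth X = \{\process \mid \forall \varprocess\in X:\ \process \orthogonal \varprocess\}$) gives $\process \orthogonal \varprocess$, as required.

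It remains to prove the soundness lemma, which I would establish by induction on the derivation of $\sesst \vdash \process$, leaning on three elementary facts about the polarity $(\orth\cdot,\orth\cdot)$ from Proposition~\ref{prop:closure}: antitonicity ($X\subseteq Y$ implies $\orth Y \subseteq \orth X$), the identity $\orth{\biorth X} = \orth X$, and the fact that $\orth{}$ turns unions into intersections ($\orth{(\bigcup_i X_i)} = \bigcap_i \orth{X_i}$). The easy cases go as follows. Rule~\rulename{t-sub} is pure antitonicity: $\varsesst \subs \sesst$ means $\csem\varsesst \subseteq \csem\sesst$ by Definition~\ref{def:subsession}, whence $\orth{\csem\sesst} \subseteq \orth{\csem\varsesst}$. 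Rules~\rulename{t-nil} and~\rulename{t-end} are immediate, since $\orth{\csem\cbottom} = \procset$ and $\orth{\csem\cend} = \orth{\{\pwin\}}$ with $\pwin \orthogonal \pwin$. For~\rulename{t-send}, the identity $\orth{\biorth X}=\orth X$ reduces the goal to showing $\co a.\process$ orthogonal to every $a.\varprocess$ with $\varprocess\in\csem\sesst$; the only reduction of $\co a.\process \parop a.\varprocess$ is the synchronisation to $\process \parop \varprocess$, which reaches only successful stable configurations because the induction hypothesis $\process \in \orth{\csem\sesst}$ together with $\varprocess \in \csem\sesst$ gives $\process \orthogonal \varprocess$.

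The delicate cases are the two that introduce server-side nondeterminism, and I expect rule~\rulename{t-receive} to be the main obstacle. Here $\orth{\csem{\bigsor_{i\in I} a_i.\sesst_{a_i}}} = \bigcap_{i} \orth{\csem{a_i.\sesst_{a_i}}}$, and $\orth{\csem{a_i.\sesst_{a_i}}} = \orth{\{\co{a_i}.\varprocess \mid \varprocess\in\csem{\sesst_{a_i}}\}}$ by $\orth{\biorth X} = \orth X$, so the goal becomes: for each $i\in I$, the process $\sum_{i\in I} a_i.\process_i$ is orthogonal to every $\co{a_i}.\varprocess$ with $\varprocess \in \csem{\sesst_{a_i}}$. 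The subtlety, already flagged in the discussion of the rule, is that several branches may be guarded by the same input $a_i$, so the synchronisation $\sum_{j} a_j.\process_j \parop \co{a_i}.\varprocess \lred{} \process_j \parop \varprocess$ may resolve to \emph{any} $\process_j$ with $a_j = a_i$, and orthogonality with $\varprocess$ must hold for all of them at once. This is exactly where the requirement that all such branches share the \emph{same} continuation type $\sesst_{a_i}$ is essential: the induction hypothesis then gives $\process_j \in \orth{\csem{\sesst_{a_i}}}$ for every such $j$, hence $\process_j \orthogonal \varprocess$, so all reducts are successful. Rule~\rulename{t-choice} is analogous but needs care with interleaving: since $\process \oplus \varprocess$ has no labelled transitions, any stable configuration reached from $\process \oplus \varprocess \parop C$ with $C \in \csem\sesst$ must first resolve the $\oplus$ to $\process$ or to $\varprocess$ (possibly after internal moves of $C$), after which it is a stable configuration reachable from $\process \parop C$ or from $\varprocess \parop C$ and is therefore successful by the induction hypothesis. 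Verifying that no reduction escapes this case analysis is the only genuinely fiddly point of the whole argument.
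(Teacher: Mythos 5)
Your proposal is correct, and its skeleton is exactly the paper's: the same soundness lemma ($\sesst \vdash \process$ implies $\process \in \orth{\csem\sesst}$), proved by induction on the typing derivation, followed by the same top-level deduction --- the paper obtains $\orth{\csem{\co\sesst}} = \csem\sesst$ by applying Theorem~\ref{thm:dual} to $\sesst$ and then using closedness of $\csem\sesst$ (Proposition~\ref{prop:closed}), whereas you apply Theorem~\ref{thm:dual} to $\co\sesst$ and use the syntactic involution $\co{\co\sesst}=\sesst$; both are fine. Where you genuinely diverge is in how the inductive cases are discharged. The paper works algebraically: it routes everything through the continuation operator $\process(\mu)$ and the operator $\gsem\alpha{X}$, whose key property $\orth{\gsem\alpha{X}} = \gsem{\co\alpha}{\orth{X}}$ (Proposition~\ref{prop:gsem}) turns \rulename{t-send} and \rulename{t-receive} into short computations --- in particular the repeated-guard subtlety is absorbed by the single continuation $\process(a_i) = \bigoplus_{a_i=a_j}\process_j \in \orth{\csem{\sesst_{a_i}}}$ --- and \rulename{t-choice} follows from closure of $\orth{\csem\sesst}$ under internal choice. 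You instead strip the bi-orthogonal via $\orth{\biorth{X}} = \orth{X}$ and argue directly on the transition system: forced synchronizations in the prefix and receive cases, and an interleaving analysis in the choice case; your observation that all branches guarded by $a_i$ share the same continuation type, so that \emph{every} reduct $\process_j \parop \varprocess$ is covered by the induction hypothesis, is precisely the same point the paper settles with the internal-choice continuation. What the paper's route buys is reuse and economy: $\gsem{}{}$ and Proposition~\ref{prop:gsem} are already needed for Theorem~\ref{thm:dual} and Lemma~\ref{lem:process_nf}, so the typing cases come almost for free and no fresh operational case analysis (with its attendant risk of missed interleavings) is required. What your route buys is self-containedness: beyond the definition of orthogonality and elementary facts about $\orth{(\cdot)}$ it needs no auxiliary machinery, at the price of the reachability bookkeeping you yourself flag as the fiddly point in the \rulename{t-choice} case.
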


There is no hypothesis concerning the viability of $\sesst$, but this
is implied. The reader can easily verify that $\sesst \vdash \process$
implies $\sesst \nseqs \ctop$, coherently with the observation that no
process is able to satisfy \emph{all} processes. As a consequence the
hypotheses $\sesst \vdash \process$ and $\co\sesst \vdash \varprocess$
are enough to ensure that $\sesst$ and its dual are viable.


\section{Concluding Remarks and Future Work}
\label{sec:conclusion}

Previous formalizations of session
types~\cite{CastagnaDezaniGiachinoPadovani09,Padovani09,BarbaneraDeLiguoro10}
are based on the observation that session types are behavioral
types. As such, they are eligible for being studied by means of the
numerous and well-developed techniques for process equivalence, and
testing equivalence in particular~\cite{CCSWithoutTau,testing}. In
this view the different modalities in which actions are offered
coincide with two known behavioral operators, the internal choice
$\oplus$ and the external choice $+$. This approach, however natural
and elegant, poses a few problems mostly due to the fact that the
external choice is sometimes an internal choice in disguise: the
language of session types may be difficult to understand to the
programmer; the resulting subtyping relation is not a pre-congruence
and is thus more difficult to use in practice; also, there are
contexts where the computation of the greatest lower bound and of the
least upper bound of session types arises naturally and these must be
computed by means of meta-operators on session types~\cite{Mezzina08}.

In this work we propose an alternative language of session types which
is not immediately related to some known process algebra. The basic
idea is that the two choices can be naturally modeled by means of
intersection and union types: the session type $\sesst \sand
\varsesst$ describes a channel having \emph{both} type $\sesst$
\emph{and} type $\varsesst$ and for this reason a process can freely
use that channel as having either type; the session type $\sesst \sor
\varsesst$ describes a channel having \emph{either} type $\sesst$
\emph{or} type $\varsesst$, therefore a process using that channel
cannot make any assumption on it unless the exchanged messages provide
enough information to disambiguate its type. The intersection and
union operators are intuitive alternatives to internal and external
choices, they provide a native mechanism to the computation of
greatest lower bounds and least upper bounds, and the subtyping
relation of the resulting theory turns out to be a pre-congruence.

It is worth noting that, in our theory, the semantics of session types
solely depends on the process language, in particular on the adopted
communication model and on the orthogonality relation. Any other
concept or result is derived by these two. In this work we have
adopted a partially asynchronous communication model, where output
messages must be consumed before the sender can engage into any other
activity, and a symmetric orthogonality relation where both processes
involved in a communication must terminate successfully if the
interaction reaches a stable state. These choices led us to rediscover
a familiar theory of session types~\cite{GayHole05} but it is
plausible to expect that different interesting theories can be
developed by varying these two seminal notions. For example, using a
truly asynchronous communication model, where an output action does
not block subsequent actions, the relation $a.\co{b} \subs \co{b}.a$
would be sound because any ``client'' of $a.\co{b}$ will eventually
receive the $b$ message that the ``server'' of $\co{b}.a$ sends ahead
of time. Using a symmetric orthogonality relation might allow us to
draw a closer comparison between our theory and more standard testing
theories~\cite{testing,CastellaniHennessy98}, where the notion of
``test'' is asymmetric.
We remark here just a few planned developments of our theory:
first of all, we want to extend the presented framework to deal with
possibly infinite session types. In principle this would amount to
using a fix point operator for determining the semantics of recursive
session types as sets of possibly infinite processes. However, the
model presented in this work may need some further technical
adjustments. To see why, consider the infinite session type determined
by the equation $\sesst = a.\sesst$ which gives rise to the semantic
equation $X = \biorth{\{ \co{a}.\process \mid \process \in X\}}$. Both
$\mbottom$ and $\mtop$ are solutions of the equation, meaning that the
semantics of a session type may not be uniquely determined. At the
same time, neither of $\mbottom$ and $\mtop$ is a satisfactory
solution because they denote non-viable session types, while we would
expect $\csem\sesst$ to contain (recursive) processes that send an
infinite number of $a$ messages.  We plan to investigate whether the
semantic model of types described in~\cite{VouillonMellies04}, which
shares many properties with ours, can be used to give a proper
semantics to infinite session types.
The second extension to the presented framework is to consider
non-atomic actions of the form $\cin\type$ and $\cout\type$ where
$\type$ is a \emph{basic type} (such as $\tint$, $\tbool$, \dots) and
actions of the form $\cin\sesst$ and $\cout\sesst$ for describing
delegations (the input and output of channels of type $\sesst$). This
will give rise to more interesting relations such as $\cout\tint \sor
\cout\treal \seqs \cout\tint$ assuming $\tint$ is a subtype of
$\treal$) and will allow us to compare more thoroughly our subtyping
relation with the existing ones~\cite{GayHole05}.
Finally, it looks like the presented approach can be easily extended
to incroporate universal and existential quantifiers in session types,
so as to model polymorphism and data encapsulation. In this way we
hope to provide semantic foundations to polymorphic session
types~\cite{Gay08}.



\paragraph*{Acknowledgments.} I am grateful to the anonymous referees
for the detailed comments and feedback on an earlier version of this
paper. I wish to thank Mariangiola Dezani, Kohei Honda, and Nobuko
Yoshida for the insightful discussions.

\bibliographystyle{plain} 
\bibliography{main}

\appendix

\section{Supplement to Section~\ref{sec:processes}}
\label{sec:extra_processes}

In this section we solely introduce some handy notation related to
processes that will be useful for the proofs in
Section~\ref{sec:extra_types}.
First we define two relations, that we dub ``may'' and ``must'',
distinguishing the fact that a process \emph{may} output some message
or is always capable to (i.e., \emph{must}) perform some input or
output action, regardless of its internal transitions.

\begin{definition}[may/must]
  Let $\mu\in\coset\nameset \cup \{ \win \}$. We say that $\process$
  \emph{may output} $\mu$, notation $\process\may\mu$, if $\process
  \wlred{\mu}$.
  Let $\mu\in\nameset\cup\coset\nameset\cup\{\win\}$. We say that
  $\process$ \emph{must} $\mu$, notation $\process\must\mu$, if
  $\process \wlred{} \process'$ implies $\process' \wlred{\mu}$.
  We say that $\process$ \emph{may converge}, notation
  $\process\mayconverge$, if $\process \wlred{} \process'$ implies
  $\process\may\mu$ for some $\mu$; we say that $\process$ \emph{must
    converge}, notation $\process\mustconverge$, if there exists $\mu$
  such that $\process\must\mu$.
\end{definition}

We will sometimes say that a process $\process$ guarantees action
$\mu$ if $\process \must \mu$.

Then, we define the continuation of a process $\process$ with respect
to an action $\mu$ as the combination of all the possible residuals of
$\process$ after $\mu$. This differs from the relation $\lred{\mu}$
which relates $\process$ with \emph{one particular} (not necessarily
unique) residual of $\process$ after $\mu$. 

\begin{definition}[continuation]
\label{def:continuation}
Let $\process \wlred{\mu}$. The \emph{continuation} of $\process$ with
respect to $\mu$ is defined as $\process(\mu) \eqdef
\bigoplus_{\process \wlred{}\lred{\mu} \varprocess} \varprocess$.
\end{definition}

For example, consider $\process = a.\process_1 + b.\process_2$. On the
one hand we have $\process \lred{a} \process_1$ and also $\process
\lred{b} \process_2$ namely, there are two possibly different
residuals of $\process$ after $a$ due to two different branches of the
external choice that are guarded by the same action. On the other
hand, the (unique) continuation of $\process$ after $a$ is $\process_1
\oplus \process_2$, which expresses the fact that both branches are
possible.


\section{Supplement to Section~\ref{sec:types}}
\label{sec:extra_types}

\subsection{Semantics}

We begin by gaining some familiarity with the orthogonal and the
bi-orthogonal operators and some of their properties, in particular we
provide alternative characterizations for $\orth{X}$ and $\biorth{X}$,
we prove that $\orth{(\cdot)}$ is anti-monotonic, and we state some
known properties regarding orthogonal set and set-theoretic operators.

\begin{proposition}
\label{prop:orth}
The following properties hold:
\begin{enumerate}
\item $\orth{X} = \bigcap_{\process\in X} \sem\process$;

\item $\biorth{X} = \bigcap_{X \subseteq \sem\process} \sem\process$;

\item $X \subseteq Y$ implies $\orth{Y} \subseteq \orth{X}$;

\item $\orth{X}$ is closed;

\item $\orth{(X \cup Y)} = \orth{X} \cap \orth{Y}$.
\end{enumerate}
\end{proposition}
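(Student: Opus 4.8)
The plan is to derive all five properties from the pointwise characterization $\orth{X} = \{ \process \in \procset \mid \forall \varprocess \in X : \process \orthogonal \varprocess \}$ already recorded in the proof of Proposition~\ref{prop:closure}, together with the symmetry of the orthogonality relation and the extensivity/monotonicity clauses of that same proposition. The only real work is bookkeeping, so I would prove the properties in an order that lets each reuse the previous ones.

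First, for property (1) I would unfold both sides. By Definition~\ref{def:orth}, $\process \in \orth{X}$ holds iff $X \subseteq \sem\process$, i.e.\ iff $\process \orthogonal \varprocess$ for every $\varprocess \in X$. On the other hand, writing the intersection index as $\varprocess$, membership $\process \in \bigcap_{\varprocess \in X} \sem\varprocess$ holds iff $\process \in \sem\varprocess$ for every $\varprocess \in X$, i.e.\ iff $\varprocess \orthogonal \process$ for every $\varprocess \in X$. Since $\orthogonal$ is symmetric, these two conditions coincide, which gives (1). Property (2) is then purely compositional: since $\biorth{X} = \orth{(\orth{X})}$, applying (1) to the outer orthogonal yields $\biorth{X} = \bigcap_{\process \in \orth{X}} \sem\process$, and by Definition~\ref{def:orth} the condition $\process \in \orth{X}$ is exactly $X \subseteq \sem\process$; re-indexing the intersection by this condition gives the claimed equality.

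Properties (3) and (5) I would read straight off Definition~\ref{def:orth}. For (3), if $X \subseteq Y$ and $\process \in \orth{Y}$ then $Y \subseteq \sem\process$, hence $X \subseteq \sem\process$, so $\process \in \orth{X}$. For (5), $\process \in \orth{(X \cup Y)}$ iff $X \cup Y \subseteq \sem\process$, which splits as the conjunction $X \subseteq \sem\process$ and $Y \subseteq \sem\process$, i.e.\ as $\process \in \orth{X}$ and $\process \in \orth{Y}$; this is exactly $\process \in \orth{X} \cap \orth{Y}$.

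The one part needing a short argument is (4), the standard Galois-connection identity $\orth{(\cdot)} \circ \orth{(\cdot)} \circ \orth{(\cdot)} = \orth{(\cdot)}$. Concretely, extensivity (Proposition~\ref{prop:closure}(1)) applied to the set $\orth{X}$ gives $\orth{X} \subseteq \biorth{\orth{X}}$, which is one inclusion. For the reverse, extensivity applied to $X$ gives $X \subseteq \biorth{X}$, and feeding this into the anti-monotonicity of property~(3), just established, yields $\orth{\biorth{X}} \subseteq \orth{X}$, i.e.\ $\biorth{\orth{X}} \subseteq \orth{X}$. The two inclusions give $\orth{X} = \biorth{\orth{X}}$, which is precisely the statement that $\orth{X}$ is closed. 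I expect this interplay of extensivity and anti-monotonicity to be the only place requiring care, mostly to keep the direction of the inclusions and the nesting of $\orth{(\cdot)}$ straight; every other step is immediate from the definitions and the symmetry of orthogonality.
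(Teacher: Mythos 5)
Your proposal is correct and follows essentially the same route as the paper's proof: items (1), (2), and (4) are argued identically (the pointwise unfolding of $\orth{X}$ using symmetry of $\orthogonal$, re-indexing the intersection for the bi-orthogonal, and the interplay of extensivity from Proposition~\ref{prop:closure}(1) with anti-monotonicity for closedness), while your direct-from-definition arguments for (3) and (5) are only a cosmetic variation of the paper's derivation of them from item (1).
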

\begin{proof}
We prove the items in order:
\begin{enumerate}
\item We have $\varprocess \in \orth{X}$ iff $X \subseteq
  \sem\varprocess$ iff $\process \orthogonal \varprocess$ for every
  $\process \in X$ iff $\varprocess \in \sem\process$ for every
  $\process \in X$ iff $\varprocess \in \bigcap_{\process \in X}
  \sem\process$.

\item By item~(1) we have $\biorth{X} = \bigcap_{\process \in
    \orth{X}} \sem\process = \bigcap_{X \subseteq \sem\process}
  \sem\process$.

\item By item~(1) we have $\orth{Y} = \bigcap_{\process\in Y} \sem\process
  \subseteq \bigcap_{\process \in X} \sem\process = \orth{X}$.

\item From Proposition~\ref{prop:closure}(1) we obtain $\orth{X}
  \subseteq \orth{\biorth{X}}$ by replacing $X$ with $\orth{X}$. From
  the same proposition and item~(3) we obtain $\orth{\biorth{X}}
  \subseteq \orth{X}$. We conclude $\orth{X} = \orth{\biorth{X}}$.

\item By item~(1) we have $\orth{(X \cup Y)} = \bigcap_{\process\in
    X\cup Y} \sem\process = \bigcap_{\process\in X} \sem\process \cap
  \bigcap_{\process\in Y} \sem\process = \orth{X} \cap \orth{Y}$.
\qedhere
\end{enumerate}
\end{proof}

It should be observed that item~(5) of the previous proposition can be
generalized to arbitrary unions, namely that
\[
\orth{\big(\bigcup_{i\in I}X_i\big)} = \bigcap_{i\in I} \orth{X_i}
\]
for arbitrary, possibly infinite family of sets $X_i$.  The reader may
also verify that $\sand$ and $\sor$ are indeed commutative and
associative operators. These properties will be silently used in some
of the proofs that follow.

We now present an auxiliary operator that is convenient in the
definition of the semantics of session types. We write
$\gsem\alpha{X}$ for the set of processes that guarantee an $\alpha$
action and whose continuation after $\alpha$ is a process in
$X$. Formally:
\[
\gsem\alpha{X} \eqdef
\begin{cases}
  \procset & \text{if $\orth{X} = \emptyset$} \\
  \{ \process \in \procset \mid \text{$\process \must \alpha$ and
    $\process(\alpha) \in X$} \} & \text{otherwise} \\
\end{cases}
\]

Using $\gsem{\cdot}{\cdot}$ one can equivalently define the
interpretation of $\alpha.\sesst$ as $\csem{\alpha.\sesst} =
\gsem{\co\alpha}{\csem\sesst}$. In particular, the orthogonal of
$\gsem\alpha{X}$ can be computed simply by turning $\alpha$ into the
corresponding co-action and by computing the orthogonal of $X$:

\begin{proposition}
\label{prop:gsem}
  $\orth{\gsem\alpha{X}} = \gsem{\co\alpha}{\orth{X}}$.
\end{proposition}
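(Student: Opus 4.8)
The plan is to unfold both sides according to the definition of $\gsem{\cdot}{\cdot}$ and to dispatch the degenerate branches first. The set $\orth{\gsem\alpha{X}}$ uses the ``$\procset$'' clause exactly when $\orth{X}=\emptyset$, whereas $\gsem{\co\alpha}{\orth{X}}$ uses it exactly when $\orth{\orth{X}}=\biorth{X}=\emptyset$, that is (since $X\subseteq\biorth{X}$ by Proposition~\ref{prop:closure}) exactly when $X=\emptyset$; these two situations are mutually exclusive, since $X=\emptyset$ forces $\orth{X}=\procset$. When $X=\emptyset$ a direct computation gives $\gsem\alpha{X}=\emptyset$, hence $\orth{\gsem\alpha{X}}=\orth{\emptyset}=\procset=\gsem{\co\alpha}{\orth{X}}$; when $\orth{X}=\emptyset$ we get $\gsem\alpha{X}=\procset$, hence $\orth{\gsem\alpha{X}}=\orth{\procset}=\emptyset$, and the right-hand side is also empty because $\process(\co\alpha)\in\orth{X}=\emptyset$ is impossible. (Here $\orth{\emptyset}=\procset$ is immediate, and $\orth{\procset}=\emptyset$ holds because some processes, such as $a\oplus b$, have empty interpretation.) It remains to treat the main case $X\ne\emptyset\ne\orth{X}$, in which both sides are given by the ``otherwise'' clause, and I would prove the two inclusions separately.

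The engine of both inclusions is a synchronisation observation that I would isolate as a lemma: if $\process\must\alpha$ and $\varprocess\must\co\alpha$, then in the system $\process\parop\varprocess$ (i)~no stable configuration is reachable before a synchronisation, and (ii)~the first synchronisation in any computation is necessarily on the complementary pair $\alpha/\co\alpha$, so that it reduces the components to an $\alpha$-residual of $\process$ and a $\co\alpha$-residual of $\varprocess$. The proof rests on rule~\rulename{r6} and on the fact that $\must$ is preserved by $\wlred{}$: an internal reduct exposing a competing output would, by~\rulename{r6}, reduce internally to a state in which $\alpha$ (respectively $\co\alpha$) is no longer guaranteed, contradicting the $\must$ hypothesis; hence each component exposes only the intended action as a synchronisation offer, while a stable reduct that still guarantees $\alpha$ (respectively $\co\alpha$) can always fire it, ruling out premature deadlock.

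For $\gsem{\co\alpha}{\orth{X}}\subseteq\orth{\gsem\alpha{X}}$, take $\varprocess$ with $\varprocess\must\co\alpha$ and $\varprocess(\co\alpha)\in\orth{X}$, and any $\process\in\gsem\alpha{X}$, so $\process\must\alpha$ and $\process(\alpha)\in X$. Writing $\process(\alpha)=\bigoplus_j\process'_j$ and $\varprocess(\co\alpha)=\bigoplus_k\varprocess'_k$ for the residual decompositions, the lemma guarantees that every stable reduct of $\process\parop\varprocess$ arises from a first synchronisation on $\alpha/\co\alpha$, i.e.\ from some pair $\process'_j\parop\varprocess'_k$. From $\process(\alpha)\in X\subseteq\sem{\varprocess(\co\alpha)}$ we get $\process(\alpha)\orthogonal\varprocess(\co\alpha)$, and since $\sem{\varprocess(\co\alpha)}=\bigcap_k\sem{\varprocess'_k}$ and $\sem{\process(\alpha)}=\bigcap_j\sem{\process'_j}$ (internal choice is intersection of interpretations), it follows that $\process'_j\orthogonal\varprocess'_k$ for all residuals. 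As orthogonality is preserved by $\wlred{}$, every reachable stable configuration then has both components able to fire $\win$, so $\process\orthogonal\varprocess$. For the converse inclusion, take $\varprocess\in\orth{\gsem\alpha{X}}$; since $\alpha.\process\must\alpha$ and $(\alpha.\process)(\alpha)=\process$, we have $\{\alpha.\process\mid\process\in X\}\subseteq\gsem\alpha{X}$, hence $\varprocess\orthogonal\alpha.\process$ for every $\process\in X$. Applying the lemma to the deterministic test $\alpha.\process$ first forces $\varprocess\must\co\alpha$, for otherwise a stable non-$\win$ configuration would be reachable. Then, fixing $\process\in X$ and driving the interaction to each $\co\alpha$-residual $\varprocess'_k$ of $\varprocess$, preservation of orthogonality under $\wlred{}$ gives $\varprocess'_k\orthogonal\process$; hence $\process\in\bigcap_k\sem{\varprocess'_k}=\sem{\varprocess(\co\alpha)}$, and letting $\process$ range over $X$ we conclude $X\subseteq\sem{\varprocess(\co\alpha)}$, i.e.\ $\varprocess(\co\alpha)\in\orth{X}$.

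The main obstacle is the synchronisation lemma of the second paragraph: it is where the partially asynchronous semantics, and in particular the preemption rule~\rulename{r6}, must be handled with care, since it is exactly what makes a $\must$-guaranteed action the \emph{sole} available synchronisation and forbids premature stable states. The second delicate point, used in both inclusions, is the passage between a \emph{combined} continuation $\process(\alpha)=\bigoplus_j\process'_j$ (respectively $\varprocess(\co\alpha)$) and its individual residuals; this is bridged precisely by the identity $\sem{\bigoplus_j\process'_j}=\bigcap_j\sem{\process'_j}$, which converts orthogonality of the continuations into orthogonality of every pair of residuals. The remaining facts, namely that $\must$ and orthogonality are preserved by $\wlred{}$, are immediate from their definitions.
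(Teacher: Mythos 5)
Your proof is correct and follows essentially the same route as the paper's: the same three-way case split on $X=\emptyset$, $\orth{X}=\emptyset$, and both nonempty, and in the main case the same characterization of orthogonality against members of $\gsem\alpha{X}$ as ``guarantees $\co\alpha$ and has continuation orthogonal to the partner's continuation''. The only difference is presentational: the paper compresses your synchronisation lemma, the pure tests $\alpha.\process$ with $\process\in X$, and the identity $\sem{\bigoplus_j\process'_j}=\bigcap_j\sem{\process'_j}$ into a single chain of equivalences, annotated only with the points where $X\ne\emptyset$ and $\orth{X}\ne\emptyset$ are needed, whereas you spell these ingredients out explicitly.
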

\begin{proof}
  We distinguish three cases:
\begin{itemize}
\item ($X = \emptyset$) Then $\orth{X} = \procset$ and we conclude
  $\orth{\gsem\alpha{X}} = \orth\emptyset = \procset =
  \gsem{\co\alpha}{\procset} = \gsem{\co\alpha}{\orth{X}}$.

\item ($\orth{X} = \emptyset$) Then $\orth{\gsem\alpha{X}} =
  \orth\procset = \emptyset = \gsem{\co\alpha}{\emptyset} =
  \gsem{\co\alpha}{\orth{X}}$.

\item ($X \ne \emptyset$ and $\orth{X} \ne \emptyset$) We have:
\[
\begin{array}{rcl@{\qquad}l}
  \varprocess \in \orth{\gsem\alpha{X}} & \iff &
  \forall \process \in \gsem\alpha{X}: \process \orthogonal \varprocess 
  & (\orth{X} \ne \emptyset) \\
  & \iff & \forall \process \in \gsem\alpha{X}: \varprocess\must\co\alpha \wedge \varprocess(\co\alpha) \orthogonal \process(\alpha) \\
  & \iff & \varprocess \must\co\alpha \wedge \forall \process \in \gsem\alpha{X}: \varprocess(\co\alpha) \orthogonal \process(\alpha) & (X \ne \emptyset) \\
  & \iff & \varprocess \must\co\alpha \wedge \varprocess(\co\alpha) \in\orth{X} \\
  & \iff & \varprocess \in \gsem{\co\alpha}{\orth{X}} \\
\end{array}
\]
namely $\orth{\gsem\alpha{X}} = \gsem{\co\alpha}{\orth{X}}$.\qedhere
\end{itemize}
\end{proof}

\begin{corollary}
  $X$ closed implies $\gsem\alpha{X}$ closed.
\end{corollary}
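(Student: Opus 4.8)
The plan is to establish the stronger, unconditional identity $\biorth{\gsem\alpha{X}} = \gsem\alpha{\biorth{X}}$ for \emph{every} set $X$, from which the corollary follows immediately by specializing to $X$ closed. The whole argument is a direct computation that invokes Proposition~\ref{prop:gsem} twice; the crucial point is that this proposition holds with no side condition, for arbitrary $\alpha$ and arbitrary sets of processes, so it may be reused freely after substituting the action and the set.

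First I would unfold the bi-orthogonal and apply Proposition~\ref{prop:gsem} to the inner orthogonal, obtaining $\orth{\gsem\alpha{X}} = \gsem{\co\alpha}{\orth{X}}$. Then I would apply the same proposition a second time --- this time instantiated with the action $\co\alpha$ and the set $\orth{X}$ --- to the remaining (outer) orthogonal, which gives
\[
\biorth{\gsem\alpha{X}}
= \orth{\gsem{\co\alpha}{\orth{X}}}
= \gsem{\co{\co\alpha}}{\biorth{X}}.
\]
Using that the co-action operation is an involution, so that $\co{\co\alpha} = \alpha$, the right-hand side simplifies to $\gsem\alpha{\biorth{X}}$, establishing the general identity.

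Finally, I would invoke the hypothesis that $X$ is closed, i.e.\ $X = \biorth{X}$, to rewrite $\gsem\alpha{\biorth{X}}$ as $\gsem\alpha{X}$, yielding $\biorth{\gsem\alpha{X}} = \gsem\alpha{X}$ --- which is precisely the statement that $\gsem\alpha{X}$ is closed. I do not expect any genuine obstacle: the result is a purely formal consequence of Proposition~\ref{prop:gsem} together with the involutivity of $\co{\,\cdot\,}$. The only thing to be mildly careful about is the bookkeeping when reapplying the proposition with the substituted arguments $\co\alpha$ and $\orth{X}$, and the cancellation $\orth{\orth{X}} = \biorth{X}$ that turns the doubled orthogonal into the closure of $X$.
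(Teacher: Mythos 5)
Your proposal is correct and is essentially the paper's own proof: the paper likewise computes $\biorth{\gsem\alpha{X}} = \orth{\gsem{\co\alpha}{\orth{X}}} = \gsem\alpha{\biorth{X}} = \gsem\alpha{X}$, i.e.\ two applications of Proposition~\ref{prop:gsem} (the second instantiated at $\co\alpha$ and $\orth{X}$), the involution $\co{\co\alpha}=\alpha$, and closedness of $X$ at the last step. No difference in substance, only in how explicitly the intermediate general identity is named.
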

\begin{proof}
  By Proposition~\ref{prop:gsem} we have $\biorth{\gsem\alpha{X}} =
  \orth{\gsem{\co\alpha}{\orth{X}}} = \gsem\alpha{\biorth{X}} =
  \gsem\alpha{X}$.
\end{proof}

We now have all the information for showing that $\sem\sesst$ is a
closed set of processes, so that we can rewrite $\sem\sesst$ into
$\biorth{\sem\sesst}$ and viceversa, whenever useful (Proof of
Theorem~\ref{thm:dual}).

\begin{proposition}
\label{prop:closed}
For every $\sesst$, the set $\csem\sesst$ is closed.
\end{proposition}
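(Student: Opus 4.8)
The plan is to proceed by structural induction on the session type $\sesst$, leaning on a single observation: every set of the form $\biorth{X}$ is closed. This is immediate from idempotency (Proposition~\ref{prop:closure}(3)), since $\biorth{\biorth{X}} = \biorth{X}$, or equivalently from Proposition~\ref{prop:orth}(4), because $\biorth{X} = \orth{(\orth{X})}$ is the orthogonal of a set and orthogonals are closed. Three of the six defining clauses of $\csem{\cdot}$ exhibit their right-hand sides literally as bi-orthogonals, so those cases require nothing beyond this remark.

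Concretely, I would first dispatch the cases $\csem\cwin = \biorth{\{\pwin\}}$ and $\csem{\sesst_1 \sor \sesst_2} = \biorth{(\csem{\sesst_1} \cup \csem{\sesst_2})}$, each closed simply because it is a bi-orthogonal. For $\csem{\alpha.\sesst} = \biorth{\{\co\alpha.\process \mid \process \in \csem\sesst\}}$ the same observation applies; alternatively I could invoke the Corollary following Proposition~\ref{prop:gsem} together with the identity $\csem{\alpha.\sesst} = \gsem{\co\alpha}{\csem\sesst}$ and the induction hypothesis that $\csem\sesst$ is closed. For the top type, $\csem\ctop = \procset = \orth\emptyset$ is itself an orthogonal set, hence closed by Proposition~\ref{prop:orth}(4).

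Two cases carry the actual content. For $\csem\cbottom = \emptyset$ I must check that the empty set is genuinely closed, i.e.\ $\biorth\emptyset = \emptyset$; since $\orth\emptyset = \procset$, this reduces to $\orth\procset = \emptyset$, the statement that no process is orthogonal to every process. I would obtain this from the computation $\orth{\{\co a, b\}} = \emptyset$ already recorded in the text together with anti-monotonicity (Proposition~\ref{prop:orth}(3)), which gives $\orth\procset \subseteq \orth{\{\co a, b\}} = \emptyset$. For $\csem{\sesst_1 \sand \sesst_2} = \csem{\sesst_1} \cap \csem{\sesst_2}$ I would use the induction hypothesis that both $\csem{\sesst_1}$ and $\csem{\sesst_2}$ are closed, and then the general fact that the intersection of two closed sets is closed: from $\csem{\sesst_1} \cap \csem{\sesst_2} \subseteq \csem{\sesst_i}$ and monotonicity (Proposition~\ref{prop:closure}(2)) one gets $\biorth{(\csem{\sesst_1}\cap\csem{\sesst_2})} \subseteq \biorth{\csem{\sesst_i}} = \csem{\sesst_i}$ for $i = 1, 2$, hence $\biorth{(\csem{\sesst_1}\cap\csem{\sesst_2})} \subseteq \csem{\sesst_1}\cap\csem{\sesst_2}$, and the reverse inclusion is extensivity (Proposition~\ref{prop:closure}(1)). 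I expect this intersection step, routine as it is, to be the only part requiring a genuine argument, with the verification that $\emptyset$ is closed being the single subtle base case.
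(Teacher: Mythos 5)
Your proof is correct and follows essentially the same route as the paper's: a structural induction on $\sesst$ in which the bi-orthogonal clauses are closed for free (idempotence, Proposition~\ref{prop:closure}(3)) and the intersection clause carries the real content. Three points of comparison are worth recording. For $\cend$, the paper does more than you do: it proves the self-orthogonality $\orth{\csem\cend} = \csem\cend$ (via $\pwin \in \csem\cend$), a fact strictly stronger than closedness, which is what the proof of Theorem~\ref{thm:dual} later relies on; your dispatch of this case as a bare bi-orthogonal is nonetheless entirely sufficient for the proposition as stated. For $\cbottom$ the paper's ``easy induction'' is silent, and you correctly identify this as the one base case with genuine content: that $\emptyset$ is closed is not a formal consequence of the closure axioms but requires $\orth\procset = \emptyset$, i.e.\ the existence of mutually incompatible processes, and your derivation $\orth\procset \subseteq \orth{\{\co a, b\}} = \emptyset$ from Proposition~\ref{prop:orth}(3) and the computation recorded in the text supplies exactly the justification the paper leaves implicit. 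Finally, for $\sesst_1 \sand \sesst_2$ the paper appeals to Proposition~\ref{prop:orth} --- presumably writing each $\csem{\sesst_i}$ as an orthogonal and using $\orth{X} \cap \orth{Y} = \orth{(X \cup Y)}$ together with item (4) --- whereas you give the generic closure-operator argument via monotonicity and extensivity; the two arguments are interchangeable, and yours has the merit of working for any closure operator.
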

\begin{proof}
  An easy induction on $\sesst$. The case when $\sesst = \cend$
  follows from the fact that $\pwin \in \csem\cend$, hence
  $\orth{\csem\cend} = \csem\cend$. The case when $\sesst = \sesst_1
  \sand \sesst_2$ is proved using Proposition~\ref{prop:orth}.
\end{proof}

\begin{theorem}[Theorem~\ref{thm:dual}]
  For every $\sesst$, $\csem{\co\sesst} = \orth{\csem\sesst}$.
\end{theorem}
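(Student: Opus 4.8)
The plan is to prove the statement by structural induction on $\sesst$, running in parallel the two recursions that define the dual $\co\sesst$ (Theorem~\ref{thm:dual}) and the semantics $\csem{\cdot}$ (Definition~\ref{def:sesst}). The workhorses are the algebraic facts already in place: anti-monotonicity of $\orth{(\cdot)}$ and the De Morgan law $\orth{(X\cup Y)}=\orth X\cap\orth Y$ (Proposition~\ref{prop:orth}(3) and (5)), the identity $\orth{\biorth X}=\orth X$ that the proof of Proposition~\ref{prop:orth}(4) establishes, the commutation $\orth{\gsem\alpha X}=\gsem{\co\alpha}{\orth X}$ (Proposition~\ref{prop:gsem}), and the closedness of every $\csem\sesst$ (Proposition~\ref{prop:closed}). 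Each inductive case is obtained by pushing $\orth{(\cdot)}$ through the matching semantic clause and then invoking the induction hypothesis.

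The three compound cases reduce to pure closure algebra. For the prefix $\sesst=\alpha.\varsesst$ I would rewrite $\csem{\alpha.\varsesst}=\gsem{\co\alpha}{\csem\varsesst}$, apply Proposition~\ref{prop:gsem} to get $\orth{\csem{\alpha.\varsesst}}=\gsem{\alpha}{\orth{\csem\varsesst}}$, replace $\orth{\csem\varsesst}$ by $\csem{\co\varsesst}$ using the induction hypothesis, and recognise $\gsem{\alpha}{\csem{\co\varsesst}}=\csem{\co\alpha.\co\varsesst}=\csem{\co{\alpha.\varsesst}}$. For the union $\sesst=\sesst_1\sor\sesst_2$ I would compute $\orth{\csem{\sesst_1\sor\sesst_2}}=\orth{\biorth{(\csem{\sesst_1}\cup\csem{\sesst_2})}}=\orth{(\csem{\sesst_1}\cup\csem{\sesst_2})}=\orth{\csem{\sesst_1}}\cap\orth{\csem{\sesst_2}}$ (Proposition~\ref{prop:orth}(4) then (5)); by the induction hypothesis this is $\csem{\co{\sesst_1}}\cap\csem{\co{\sesst_2}}=\csem{\co{\sesst_1}\sand\co{\sesst_2}}=\csem{\co{\sesst_1\sor\sesst_2}}$. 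The intersection case $\sesst=\sesst_1\sand\sesst_2$ is dual, the only extra ingredient being the identity $\biorth{(\orth A\cup\orth B)}=\orth{(A\cap B)}$, valid when $A,B$ are closed, which follows from $\biorth{(\orth A\cup\orth B)}=\orth{(\orth{\orth A}\cap\orth{\orth B})}=\orth{(\biorth A\cap\biorth B)}$ via Proposition~\ref{prop:orth}(5) and the closedness of $\csem{\sesst_1},\csem{\sesst_2}$.

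The base cases $\cbottom$ and $\ctop$ are immediate. For $\cbottom$, $\orth{\csem\cbottom}=\orth{\mbottom}=\procset=\csem\ctop=\csem{\co\cbottom}$. For $\ctop$, $\orth{\csem\ctop}=\orth\procset=\mbottom$, since by anti-monotonicity $\orth\procset\subseteq\orth{\{\co a,b\}}=\emptyset$, and $\mbottom=\csem\cbottom=\csem{\co\ctop}$.

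The main obstacle is the $\cwin$ case, the one place where genuine process-level reasoning is unavoidable. As $\co\cwin=\cwin$, I must show $\csem\cwin$ is self-orthogonal; using $\csem\cwin=\biorth{\{\pwin\}}$ and $\orth{\biorth X}=\orth X$ this reduces to $\orth{\{\pwin\}}=\biorth{\{\pwin\}}$. The inclusion $\biorth{\{\pwin\}}\subseteq\orth{\{\pwin\}}$ is easy, since $\pwin\orthogonal\pwin$ gives $\{\pwin\}\subseteq\orth{\{\pwin\}}$ and $\orth{\{\pwin\}}$ is closed. For the converse I would first characterise $\orth{\{\pwin\}}=\{\process\mid\process\must\win\}$ (composing with $\pwin$ admits no synchronisation, so orthogonality to $\pwin$ merely requires every stable residual to signal $\win$), and then prove that any two processes guaranteeing $\win$ are mutually orthogonal. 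The crux is a lemma stating that $\process\must\win$ forbids $\process$ from ever reaching a state offering an output: an output summand would, by rule~\rulename{r6}, let $\process$ reduce to a committed output $\co a.\process'$ that can no longer signal $\win$, contradicting $\process\must\win$. With no output available on either side, $\process\parop\varprocess$ performs only independent internal moves, so every stable configuration has both components signalling $\win$, whence $\process\orthogonal\varprocess$ and $\process\in\biorth{\{\pwin\}}$.
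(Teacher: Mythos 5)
Your proposal is correct and takes essentially the same route as the paper's proof: a structural induction whose prefix case is Proposition~\ref{prop:gsem} and whose intersection and union cases are the same closure-algebra manipulations (Proposition~\ref{prop:orth} plus closedness of $\csem\sesst$, Proposition~\ref{prop:closed}), merely written from $\orth{\csem\sesst}$ toward $\csem{\co\sesst}$ rather than the reverse. The only substantive difference is the $\cwin$ case: the paper simply asserts $\csem{\co\cwin} = \{ \process \in \procset \mid \process \must \win \} = \orth{\csem\cwin}$, whereas you actually prove it, and your lemma based on rule~\rulename{r6} (a process guaranteeing $\win$ can never offer an output, hence two such processes compose without synchronizations and are orthogonal) supplies exactly the process-level reasoning that the paper leaves implicit.
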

\begin{proof}
  By induction on $\sesst$ and by cases on its shape:
\begin{itemize}
\item $\csem{\co\cbottom} = \csem{\ctop} = \procset = \orth\emptyset =
  \orth{\csem\cbottom}$.

\item $\csem{\co\ctop} = \csem\cbottom = \emptyset = \orth\procset =
  \orth{\csem\ctop}$.

\item $\csem{\co\cend} = \{ \process \in \procset \mid \process \must
  \win \} = \orth{\csem\cend}$.

\item $\csem{\co{\alpha.\varsesst}} = \csem{\co\alpha.\co\varsesst} =
  \gsem{\co\alpha}{\csem{\co\varsesst}} =
  \gsem{\co\alpha}{\orth{\csem\varsesst}} =
  \orth{\gsem\alpha{\csem\varsesst}} =
  \orth{\csem{\alpha.\varsesst}}$.

\item $\csem{\co{\sesst_1 \sand \sesst_2}} = \csem{\co\sesst_1 \sor
    \co\sesst_2} = \biorth{(\csem{\co\sesst_1} \cup
    \csem{\co\sesst_2})} = \biorth{(\orth{\csem{\sesst_1}} \cup
    \orth{\csem{\sesst_2}})} = \orth{(\biorth{\csem{\sesst_1}} \cap
    \biorth{\csem{\sesst_2}})} = \orth{(\csem{\sesst_1} \cap
    \csem{\sesst_2})} = \orth{\csem{\sesst_1 \sand \sesst_2}}$.

\item $\csem{\co{\sesst_1 \sor \sesst_2}} = \csem{\co\sesst_1 \sand
    \co\sesst_2} = \csem{\co\sesst_1} \cap \csem{\co\sesst_2} =
  \orth{\csem{\sesst_1}} \cap \orth{\csem{\sesst_2}} =
  \orth{(\csem{\sesst_1} \cup \csem{\sesst_2})} = \orth{\csem{\sesst_1
      \sor \sesst_2}}$.
  \qedhere
\end{itemize}
\end{proof}

\subsection{Subtyping Algorithm}

\begin{lemma}
\label{lem:process_nf}
Let $\sesst \equiv \bigsor_{a\in\rset{a}} a.\sesst_a\optional{\sor
  \cend}_{\win\in\rset{a}}$ and $\varsesst \equiv
\bigsand_{a\in\rset{a}} \co{a}.\varsesst_a\optional{\sand
  \cend}_{\win\in\rset{a}}$ for some $\rset{a} \subseteq \nameset \cup
\{ \win \}$ where $\sesst_a$ and $\varsesst_a$ are viable for every $a
\in \rset{a}$. Then the following properties hold:
\begin{enumerate}
\item $\process \in \csem\sesst{}$ if and only if
  $\process\mayconverge$ and $\{ \mu \mid \process \may \mu \}
  \subseteq \coset{\rset{a}}$ and $\process\may\co{a}$ implies
  $\process(\co{a}) \in \csem{\sesst_a}{}$;

\item $\process \in \csem\varsesst{}$ if and only if
  $\process\mustconverge$ and $\rset{a} \subseteq \{ \mu \mid \process
  \must \mu \}$ and $a \in \rset{a}$ implies $\process(a) \in
  \csem{\varsesst_a}{}$.
\end{enumerate}
\end{lemma}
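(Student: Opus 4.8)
The plan is to prove the two items in the order (2) then (1), exploiting the fact that they are dual: the intersection-of-outputs case (2) can be read off almost directly from the defining equations for $\csem\cdot$, after which the union-of-inputs case (1) follows by applying (2) to the dual type and analysing orthogonality through Theorem~\ref{thm:dual}.

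For (2) I would simply unfold the semantics. Since $\sand$ is interpreted as plain set intersection, $\csem\varsesst = \bigcap_{a\in\rset{a}} \csem{\co a.\varsesst_a}$, with $\csem\cend$ as an extra conjunct when $\win\in\rset{a}$. Recalling that $\csem{\co a.\varsesst_a} = \gsem{a}{\csem{\varsesst_a}}$ and that $\viable{\varsesst_a}$ forces both $\csem{\varsesst_a}\ne\emptyset$ and $\orth{\csem{\varsesst_a}}\ne\emptyset$, the operator $\gsem{a}{\cdot}$ lands in its non-degenerate branch, so $\csem{\co a.\varsesst_a} = \{\process \mid \process\must a \text{ and } \process(a)\in\csem{\varsesst_a}\}$. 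Using the identity $\csem\cend = \{\process \mid \process\must\win\}$ established in the proof of Theorem~\ref{thm:dual}, intersecting these conditions over $a\in\rset{a}$ yields exactly $\rset{a}\subseteq\{\mu\mid\process\must\mu\}$ together with $\process(a)\in\csem{\varsesst_a}$ for each name $a\in\rset{a}$; must-convergence then follows at once whenever $\rset{a}\ne\emptyset$, the only case relevant to normal forms (the empty case gives $\ctop$, which is not viable). Both directions of (2) are this single computation read forwards and backwards.

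For (1) I would observe that $\co\sesst = \bigsand_{a\in\rset{a}} \co a.\co{\sesst_a}\optional{\sand\cend}_{\win\in\rset{a}}$ has precisely the shape covered by (2), with viable continuations $\co{\sesst_a}$, and that Theorem~\ref{thm:dual} gives $\csem\sesst = \orth{\csem{\co\sesst}}$. Hence $\process\in\csem\sesst$ iff $\process\orthogonal\varprocess$ for every $\varprocess\in\csem{\co\sesst}$, and by (2) those $\varprocess$ are exactly the must-converging processes guaranteeing every $a\in\rset{a}$ and satisfying $\varprocess(a)\in\csem{\co{\sesst_a}}=\orth{\csem{\sesst_a}}$. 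It then remains to show that ``$\process$ is orthogonal to every such test'' is equivalent to the three stated conditions. For the forward direction I would argue contrapositively, exhibiting a test in $\csem{\co\sesst}$ that $\process$ fails: a pure receiver $\sum_{a\in\rset{a}} a.\delta_a\optional{+\pwin}_{\win\in\rset{a}}$ with $\delta_a\in\orth{\csem{\sesst_a}}$ lies in $\csem{\co\sesst}$, and it exposes a deadlock whenever $\process$ may output some $\mu\notin\coset{\rset{a}}$ or can reach a stable state offering neither an output nor $\win$; and if $\process\may\co a$ with $\process(\co a)\notin\csem{\sesst_a}$, then since $\csem{\sesst_a}$ is closed (Proposition~\ref{prop:closed}) there is $\delta\in\orth{\csem{\sesst_a}}$ with $\process(\co a)\not\orthogonal\delta$, which I graft as the $a$-branch continuation of the test. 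For the backward direction I would take a run $\process\parop\varprocess\wlred{}\process'\parop\varprocess'\nlred{}$ and show both residuals can signal $\win$, matching each output $\co a$ of $\process$ (necessarily with $a\in\rset{a}$, by the second condition) against the guaranteed input $a$ of $\varprocess$ and invoking $\process(\co a)\in\csem{\sesst_a}$, $\varprocess(a)\in\orth{\csem{\sesst_a}}$ to keep the residuals orthogonal.

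The main obstacle is this last orthogonality analysis in (1), and in particular reconciling the single-step residuals reached by $\lred{\co a}$ with the aggregated continuation $\process(\co a)=\bigoplus_{\process\wlred{}\lred{\co a}\varprocess}\varprocess$ of Definition~\ref{def:continuation}: because an output may preempt sibling branches of an external choice (rule~\rulename{r6}) and because internal choice scatters $\process$ across several residuals, a handshake step must be shown to send an orthogonal pair to an orthogonal pair of \emph{continuations}, not of arbitrary one-step residuals. I expect this to require a small auxiliary lemma stating that $\process\orthogonal\varprocess$ is detected step-wise by the may/must modalities together with the continuation operator; granted such a lemma, both directions reduce to the bookkeeping sketched above, and the remaining cases (the output $\win$ matched against $\win\in\rset{a}$, and the convergence clauses) are routine.
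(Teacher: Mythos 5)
Your proposal is correct, and it reaches the paper's result by a mildly different organization, so a comparison is worthwhile. The paper treats the two items independently: item (2) is exactly the direct unfolding you give, while for item (1) it rewrites $\process\in\csem\sesst$ as $\orth{\csem\sesst}\subseteq\sem\process$ (closedness) and computes $\orth{\csem\sesst}=\bigcap_{a\in\rset{a}}\gsem{a}{\orth{\csem{\sesst_a}}}$ (with $\csem\cend$ as an extra conjunct when $\win\in\rset{a}$) directly from Proposition~\ref{prop:orth}(5) and Proposition~\ref{prop:gsem}, then deduces the three conditions from orthogonality against the canonical receivers $\sum_{a\in\rset{a}}a.\varprocess_a$ (plus $\pwin$ when $\win\in\rset{a}$) with $\varprocess_a\in\orth{\csem{\sesst_a}}$. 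You obtain the same test set by applying (2) to $\co\sesst$ and invoking Theorem~\ref{thm:dual}; since that theorem is itself proved by exactly this orthogonal computation, the mathematical content is identical, but your route is more modular and has two genuine advantages: it makes both directions of the equivalence in (1) explicit (the paper spells out only that the conditions are necessary, leaving sufficiency --- your run analysis --- implicit), and it records that viability of the continuations is what forces $\gsem{a}{\csem{\varsesst_a}}$ into its non-degenerate branch, a hypothesis the paper uses silently (likewise, your explicit exclusion of $\rset{a}=\emptyset$ in (2) patches a corner the paper's proof glosses over, since $\csem\ctop=\procset$ contains processes that do not must-converge). Finally, the obstacle you flag --- decomposing orthogonality step-wise into must/may modalities plus orthogonality of the aggregated continuations of Definition~\ref{def:continuation}, in the presence of rule \rulename{r6} preemption --- is precisely the equivalence the paper asserts without proof inside the chain of Proposition~\ref{prop:gsem}, namely that $\process\orthogonal\varprocess$ iff $\varprocess\must\co\alpha$ and $\varprocess(\co\alpha)\orthogonal\process(\alpha)$ for $\process\in\gsem{\alpha}{X}$; so the auxiliary lemma you postulate is no more than what the paper itself takes for granted, and your proposal is sound by the paper's own standard.
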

\begin{proof}
We prove the two items in order:
\begin{enumerate}
\item Since $\csem\sesst$ is closed we have $\process \in \csem\sesst$
  if and only if $\process \in \biorth{\csem\sesst}$ if and only if
  $\orth{\csem\sesst} \subseteq \sem\process$.
  Now
\[
  \orth{\csem\sesst} =
  \orth{\left(\bigcup_{a\in\rset{a}} \gsem{\co{a}}{\csem{\sesst_a}}\optional{\cup \csem{\cend}}_{\win\in\rset{a}}\right)}
  =
  \bigcap_{a\in\rset{a}} \orth{\gsem{\co{a}}{\csem{\sesst_a}}}
  \optional{\cap \orth{\csem\cend}}_{\win\in\rset{a}}
  =
  \bigcap_{a\in\rset{a}} \gsem{a}{\orth{\csem{\sesst_a}}}
  \optional{\cap \csem\cend}_{\win\in\rset{a}}
\]
in particular $\sum_{a\in\rset{a}} a.\varprocess_a\optional{+
  \pwin}_{\win\in\rset{a}} \in \orth{\csem\sesst}$ for every
$\varprocess_a\in\orth{\csem{\sesst_a}}$.
We deduce $\process \mayconverge$ and $\process\may\mu$ implies
$\mu\in\coset{\rset{a}}$ and $\process\may\co{a}$ implies
$\process(\co{a}) \orthogonal \varprocess_a$. Since this holds for every
$\varprocess_a \in \orth{\csem{\sesst_a}}$ we have
$\orth{\csem{\sesst_a}} \subseteq \sem{\process(\co{a})}$, which is
equivalent to $\process(\co{a}) \in \csem{\sesst_a}$.

\item We have
\[
  \csem\varsesst =
  \bigcap_{a\in\rset{a}}
  \gsem{a}{\csem{\varsesst_a}}
  \optional{\cap \csem\cend}_{\win\in\rset{a}}
\]
from which we deduce that $\process \in \csem\varsesst$ if and only if
$\process\mustconverge$ and $\mu\in\rset{a}$ implies
$\process\must\mu$ and $a\in\rset{a}$ implies $\process(a) \in
\csem{\varsesst_a}$.
\qedhere
\end{enumerate}
\end{proof}

\begin{lemma}[Lemma~\ref{lem:simplify}]
  The laws in Table~\ref{tab:simpl} are sound.
\end{lemma}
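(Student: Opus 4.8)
The plan is to read soundness of a law $L_1 \seqa L_2$ as the semantic equality $\csem{L_1} = \csem{L_2}$, which by Definition~\ref{def:subsession} is exactly $L_1 \seqs L_2$. Since Table~\ref{tab:simpl} omits the symmetric and dual laws, I would first record that the dual laws cost nothing: applying $\orth{(\cdot)}$ to an equality $\csem{\sesst_1} = \csem{\sesst_2}$ and using Theorem~\ref{thm:dual} gives $\csem{\co{\sesst_1}} = \csem{\co{\sesst_2}}$, so each law and its dual stand or fall together. It then remains to treat the displayed laws, which I would group by the amount of work they need. Throughout I will use that every $\csem\sesst$ is closed (Proposition~\ref{prop:closed}), so that a set of the form $\csem\sesst$ has empty orthogonal iff it equals $\procset$.

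The laws \rulename{e-prefix}, \rulename{e-bottom}, and \rulename{e-top} are immediate from Definition~\ref{def:sesst} with $\csem\cbottom = \mbottom$ and $\csem\ctop = \mtop$: for instance $\csem{\alpha.\cbottom} = \biorth{\emptyset} = \emptyset$, while $\csem{\cbottom\sand\sesst} = \emptyset\cap\csem\sesst = \emptyset$ and $\csem{\ctop\sand\sesst} = \procset\cap\csem\sesst = \csem\sesst$. For \rulename{e-dist}, writing $\csem{\alpha.\sesst} = \gsem{\co\alpha}{\csem\sesst}$ reduces the claim to the identity $\gsem{\co\alpha}{X}\cap\gsem{\co\alpha}{Y} = \gsem{\co\alpha}{X\cap Y}$ for closed $X,Y$. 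In the principal case, where $\orth X$ and $\orth Y$ are both nonempty (and hence $\orth{(X\cap Y)}\ne\emptyset$, since $X\cap Y\subseteq X$ gives $\orth{(X\cap Y)}\supseteq\orth X$), membership in $\gsem\beta Z$ is $\process\must\beta$ together with $\process(\beta)\in Z$, and since the continuation $\process(\co\alpha)$ of Definition~\ref{def:continuation} is uniquely determined, the two conjoined conditions collapse to $\process(\co\alpha)\in X\cap Y$; the degenerate cases $\orth X=\emptyset$ or $\orth Y=\emptyset$ (where the corresponding $\gsem{\co\alpha}{\cdot}$ factor collapses to $\procset$) I would dispatch separately.

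The substance of the lemma lies in the four remaining laws, for which I would invoke Lemma~\ref{lem:process_nf}; its viability hypotheses are precisely the side conditions listed in the table. The two collapsing laws \rulename{e-input-end} and \rulename{e-input-output} amount to showing that the left-hand semantics is empty. For \rulename{e-input-end}, any $\process$ in the input union satisfies $\{\mu\mid\process\may\mu\}\subseteq\coset{\rset a}$ by Lemma~\ref{lem:process_nf}(1); as that union carries no $\cend$ summand, $\win\notin\coset{\rset a}$ and so $\win\notin\{\mu\mid\process\may\mu\}$, yet membership in $\cend$ forces $\process\must\win$, hence $\process\may\win$, a contradiction. For \rulename{e-input-output} the tension is genuinely operational and is where the real work sits: a client of $\co b.\varsesst$ satisfies $\process\must b$, whereas a client of the input union is forced by $\process\mayconverge$ to offer some output $\co a\in\coset{\rset a}$; here I would appeal to the irrevocability of output embodied in rule~\rulename{r6}, which lets $\process$ reduce internally to a committed state $\co a.\process'$ that can no longer perform the input $b$, contradicting $\process\must b$.

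The mixed law \rulename{e-input-output-end} is the quantitative form of the same effect, and I would prove it by double inclusion, using $\csem{\co b.\varsesst\sand\cend} = \csem{\co b.\varsesst}\cap\csem\cend$ and $\csem\cend = \{\process\mid\process\must\win\}$ (from Theorem~\ref{thm:dual}, since $\co\cend = \cend$): the condition $\process\must b$ again annihilates every genuine input branch by the rule~\rulename{r6} argument, so a process on the left has may-outputs contained in $\{\win\}$, and combining this with $\process\mayconverge$ yields exactly $\process\must\win$, matching the right-hand side. Finally \rulename{e-input-input} is the most computational but conceptually routine: applying Lemma~\ref{lem:process_nf}(1) to all three input unions, the may-output constraints intersect as $\coset{\rset a}\cap\coset{\rset b} = \coset{\rset a\cap\rset b}$, the continuation constraints intersect as $\process(\co a)\in\csem{\sesst_a}\cap\csem{\varsesst_a} = \csem{\sesst_a\sand\varsesst_a}$, and the $\cend$ summand survives exactly when $\win\in\rset a\cap\rset b$, reproducing the right-hand side verbatim. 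The main obstacle, then, is not the set-theoretic bookkeeping but the operational step common to \rulename{e-input-output} and \rulename{e-input-output-end}: turning the informal slogan that output is irrevocable (rule~\rulename{r6}) into the precise statement that $\process\must b$ rules out every output may-action.
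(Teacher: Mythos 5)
Your proposal is correct and follows essentially the same route as the paper's own proof: the trivial laws by direct computation, \rulename{e-dist} via the identity for $\gsem{\cdot}{\cdot}$, and the four input laws via Lemma~\ref{lem:process_nf} combined with the may/must incompatibility arising from the output-commitment rule~\rulename{r6}. If anything you are slightly more careful than the paper (explicit degenerate cases in \rulename{e-dist}, the observation that dual laws follow from Theorem~\ref{thm:dual}, and a cleaner contradiction for \rulename{e-input-end} using $\process\must\win\Rightarrow\process\may\win$), but the decomposition and the key lemmas coincide.
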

\begin{proof}
  Laws~\rulename{e-prefix}, \rulename{e-bottom}, and~\rulename{e-top}
  are left as easy exercises for the reader.
  Regarding rule~\rulename{e-dist} we have $\csem{\alpha.\sesst \sand
    \alpha.\varsesst} = \csem{\alpha.\sesst} \cap
  \csem{\alpha.\varsesst} = \gsem{\co\alpha}{\csem\sesst} \cap
  \gsem{\co\alpha}{\csem\varsesst} = \{ \process \in \procset \mid
  \process \must \co\alpha \wedge \process(\co\alpha) \in \csem\sesst
  \} \cap \{ \process \in \procset \mid \process \must \co\alpha
  \wedge \process(\co\alpha) \in \csem\varsesst \} = \{ \process \in
  \procset \mid \process \must \co\alpha \wedge \process(\co\alpha)
  \in \csem\sesst \cap \csem\varsesst \} =
  \gsem{\co\alpha}{\csem\sesst \cap \csem\varsesst} =
  \gsem{\co\alpha}{\csem{\sesst \sand \varsesst}} =
  \csem{\alpha.(\sesst \sand \varsesst)}$.
  Regarding rule~\rulename{e-input-end}, let $\sesst \eqdef
  \bigsor_{a\in\rset{a}} a.\sesst_a$ and suppose by contradiction that
  $\process \in \csem{\sesst \sand \cend}$. Then $\process \in
  \csem\sesst$ and $\process \in \csem\cend$ which implies
  $\process\mayconverge$ and $\{ \mu \mid \process\may\mu \} \subseteq
  \coset{\rset{a}}$ and $\process \must \win$.
  Since $\process\may\co{a}$ and $\process\must\win$ are incompatible
  properties we deduce $\rset{a} = \emptyset$.
  Then $\sesst \seqs \cbottom$, which contradicts the hypothesis
  $\process \in \csem{\sesst \sand \cend}$.
  The proof that rule~\rulename{e-input-output} is sound is similar,
  except that in this case $\process \in \csem{\co{b}.\varsesst}$
  implies $\process \must b$.
  Regarding rule~\rulename{e-input-output-end}, let $\sesst \eqdef
  \bigsor_{a\in\rset{a}} a.\sesst_a \sor \cend$. We only need to prove
  $\sesst \sand \co{b}.\varsesst \subs \cend$ because $\sesst \sand
  \co{b}.\varsesst \subs \co{b}.\varsesst$ is obvious and
  $\co{b}.\varsesst \sand \cend \subs \sesst \sand \co{b}.\varsesst$
  follows immediately from the fact that $\cend \subs \sesst$ and the
  pre-congruence of $\subs$.
  Let $\process \in \csem{\sesst \sand \co{b}.\varsesst}$. By
  Lemma~\ref{lem:process_nf} we deduce $\process\must b$ and
  $\process\mayconverge$. The only action in $\coset\nameset \cup \{
  \win \}$ that may coexist with a guaranteed input action ($b$) is
  $\win$. Since $\process\mayconverge$ we have $\process\may\mu$
  implies $\mu=\win$, hence $\process\must\win$. We conclude
  $\process\in\csem\cend$.
  Regarding rule~\rulename{e-input-input} let $\sesst \eqdef
  \bigsor_{a\in\rset{a}}
  a.\sesst_a\optional{\sor\cend}_{\win\in\rset{a}}$ and $\varsesst
  \eqdef \bigsor_{b\in\rset{b}}
  b.\varsesst_b\optional{\sor\cend}_{\win\in\rset{b}}$.
  By Lemma~\ref{lem:process_nf} we have $\process \in \csem{\sesst
    \sand \varsesst}$ if and only if $\process \mayconverge$ and $\{
  \mu \mid \process \may \mu \} \subseteq \coset{\rset{a} \cap
    \rset{b}}$ and $\process\may\co{a}$ implies $\process(\co{a}) \in
  \csem{\sesst_a} \cap \csem{\varsesst_a}$ if and only if $\process
  \in \csem{\bigsor_{a\in\rset{a}\cap\rset{b}} a.(\sesst_a \sand
    \varsesst_a)\{{} \sor \cend\}_{\win\in\rset{a}\cap\rset{b}}}$.
\end{proof}

Some of the proofs that follow are defined by induction on the
\emph{depth} of session types. By ``depth'' of a session type we mean
the maximum number of nested actions in it. For example $a.\co{b}
\wedge c$ has depth 2, while $a.\co{b}.c$ has depth 3. The session
types $\cbottom$, $\ctop$, and $\cend$ all have depth 0.

\begin{lemma}[Lemma~\ref{lem:nf}]
  For every session type $\sesst$ there exists $\varsesst$ in normal
  form such that $\sesst \seqs \varsesst$.
\end{lemma}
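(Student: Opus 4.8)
The plan is to prove the statement constructively, by rewriting $\sesst$ with the laws of Table~\ref{tab:simpl} read from left to right. Since each such law is an instance of $\seqs$ by Lemma~\ref{lem:simplify}, any finite chain of rewrites preserves the semantics, so it suffices to exhibit a terminating strategy whose result has the required shape. I would argue by structural induction on $\sesst$, delegating the two connective cases to a separate \emph{merge} procedure on already-normalised operands that I justify by a nested induction on depth (as anticipated by the remark preceding the lemma).

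First I would dispatch the base and prefix cases. The types $\cbottom$, $\ctop$ and $\cend$ are already in normal form, reading $\ctop = \bigsand_{a\in\emptyset}\co{a}.\sesst_a$, $\cbottom = \bigsor_{a\in\emptyset} a.\sesst_a$ and $\cend = \bigsand_{a\in\emptyset}\co{a}.\sesst_a\sand\cend$ under the stated conventions for empty intersections and unions. For a prefix $\alpha.\sesst'$, the structural induction hypothesis yields a normal form $\varsesst'$ of $\sesst'$; if $\varsesst'$ is viable then $\alpha.\varsesst'$ is immediately normal (a one-branch intersection when $\alpha$ is an output, a one-branch union when $\alpha$ is an input), and otherwise $\varsesst' \seqs \cbottom$ or $\varsesst' \seqs \ctop$, so rule~\rulename{e-prefix} (or its dual $\alpha.\ctop \seqs \ctop$) collapses $\alpha.\varsesst'$ to $\cbottom$ or $\ctop$.

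The core is the intersection case $\sesst_1 \sand \sesst_2$; the union case is symmetric through the duality of Theorem~\ref{thm:dual} and the dual laws. I would first normalise $\sesst_1$ and $\sesst_2$ to $N_1$ and $N_2$ by the structural hypothesis, then merge $N_1 \sand N_2$ by case analysis on the two top-level shapes, recursing by induction on depth. When both are intersections of outputs I combine them into a single intersection over the union of their branch sets, reconcile the two optional $\cend$ flags, and apply rule~\rulename{e-dist} to fuse any two branches $\co{a}.\sesst_a$ and $\co{a}.\varsesst_a$ guarded by the same output into $\co{a}.(\sesst_a \sand \varsesst_a)$, re-normalising the fused continuation recursively; this continuation has strictly smaller depth, which is exactly what makes the recursion terminate. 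When both are unions of inputs I apply rule~\rulename{e-input-input}, restricting to the common names $\rset{a}\cap\rset{b}$ and merging continuations pointwise, again at smaller depth. When one operand is an intersection of outputs and the other a union of inputs I use rules~\rulename{e-input-output} and~\rulename{e-input-output-end} to collapse the mixed product either to $\cbottom$ or to the residual $\co{b}.\varsesst \sand \cend$. Throughout I must prune branches whose continuations normalise to $\cbottom$ (absorbed via $\co{a}.\cbottom \seqs \cbottom$ and rule~\rulename{e-bottom}, which collapses the whole intersection to $\cbottom$) or to $\ctop$ (discarded via rule~\rulename{e-top}).

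The main obstacle is precisely this merge step. The laws of Table~\ref{tab:simpl} carry viability side conditions, so before applying them I must verify that the relevant continuations are viable; this forces me to maintain, as an induction invariant, that every continuation the procedure produces is either a viable type in normal form or has already been reduced to $\cbottom$ or $\ctop$ and absorbed. Getting the bookkeeping of the optional $\cend$ flags right across all four shape combinations, and confirming that the depth of each recursively merged continuation strictly decreases so that the nested recursion is well founded under the depth measure, is the only genuinely delicate part; once the invariant is in place, the individual rewriting steps are routine.
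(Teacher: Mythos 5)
Your proposal is correct and follows essentially the same route as the paper's proof: a constructive normalisation via the laws of Table~\ref{tab:simpl}, with the same case analysis on the shapes of the normalised operands and the same use of \rulename{e-dist}, \rulename{e-input-input}, \rulename{e-input-output} and \rulename{e-input-output-end} to merge them, the recursive calls on fused continuations being justified by strictly decreasing depth. The only difference is organisational --- the paper runs a single induction on depth with cases on shape, whereas you separate an outer structural induction from a nested depth induction for the merge step --- which is just a more explicit packaging of the same argument.
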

\begin{proof}
  By induction on the depth of $\sesst$ and by cases on its shape.
\begin{itemize}
\item If $\sesst \equiv \ctop$ or $\sesst \equiv \cbottom$ or $\sesst
  \equiv \cend$, then $\sesst$ is already in normal form.

\item If $\sesst \equiv \alpha.\sesst'$, then by induction hypothesis
  there exists $\varsesst'$ in normal form such that $\sesst' \seqs
  \varsesst'$.
  We reason by cases on $\varsesst'$ for finding $\varsesst$ in normal
  form such that $\sesst \seqs \varsesst$:
\begin{itemize}
\item if $\varsesst' \equiv \ctop$, then $\sesst \seqs \alpha.\ctop
  \seqs \ctop$;
\item if $\varsesst' \equiv \cbottom$, then $\sesst \seqs
  \alpha.\cbottom \seqs \cbottom$;
\item in all the other cases we have $\sesst \seqs \alpha.\varsesst'$
  which is in normal form.
\end{itemize}

\item If $\sesst \equiv \sesst_1 \sand \sesst_2$, then by induction
  hypothesis there exist $\varsesst_1$ and $\varsesst_2$ in normal
  form such that $\sesst_1 \seqs \varsesst_1$ and $\sesst_2 \seqs
  \varsesst_2$.
  We reason by cases on $\varsesst_1$ and $\varsesst_2$ (symmetric
  cases omitted):
\begin{itemize}
\item if $\varsesst_1 \equiv \cbottom$ we have $\sesst \seqs \cbottom
  \sand \varsesst_2 \seqs \cbottom$;

\item if $\varsesst_1 \equiv \ctop$ we have $\sesst \seqs \ctop \sand
  \varsesst_2 \seqs \varsesst_2$;

\item if $\varsesst_1 \equiv \bigsor_{a\in\rset{a}}
  a.\varsesst_{1,a}\optional{\sor \cend}_{\win\in\rset{a}}$ and
  $\varsesst_2 \equiv \bigsor_{b\in\rset{b}}
  b.\varsesst_{2,b}\optional{\sor \cend}_{\win\in\rset{b}}$, then by
  rule~\rulename{e-input-input} we have $\sesst \seqs \varsesst_1
  \sand \varsesst_2 \seqs \bigsor_{a\in\rset{a} \cap\rset{b}}
  a.(\varsesst_{1,a} \sand \varsesst_{2,a})\optional{\sor
    \cend}_{\win\in\rset{a}\cap\rset{b}}$.
  By induction hypothesis there exists $\varsesst_a$ in normal form
  such that $\varsesst_{1,a} \sand \varsesst_{2,a} \seqs \varsesst_a$
  for every $a \in A \cap B$, therefore $\sesst \seqs
  \bigsor_{a\in\rset{a}\cap\rset{b}} a.\varsesst_a\optional{\sor
    \cend}_{\win\in\rset{a}\cap\rset{b}}$.

\item if $\varsesst_1 \equiv \bigsand_{a\in\rset{a}}
  \co{a}.\varsesst_{1,a}\optional{\sand \cend}_{\win\in\rset{a}}$ and
  $\varsesst_2 \equiv \bigsand_{b\in\rset{b}}
  \co{b}.\varsesst_{2,a}\optional{\sand \cend}_{\win\in\rset{b}}$, then
  $\sesst \seqs \varsesst_1 \sand \varsesst_2 \seqs
  \bigsand_{a\in\rset{a}\setminus\rset{b}} \co{a}.\varsesst_{1,a} \sand
  \bigsand_{b\in\rset{b}\setminus\rset{a}} \co{b}.\varsesst_{2,b} \sand
  \bigsand_{a\in\rset{a} \cap\rset{b}} \co{a}.\varsesst_a
  \optional{\sand \cend}_{\win \in\rset{a}\cup\rset{b}}$ where
  $\varsesst_a$ is in normal form and $\varsesst_{1,a} \sand
  \varsesst_{2,a} \seqs \varsesst_a$ for every $a \in\rset{a}
  \cap\rset{b}$.

\item if $\varsesst_1 \equiv \bigsor_{a\in\rset{a}} a.\varsesst_{1,a}$
  and $\varsesst_2 \equiv \bigsand_{b\in\rset{b}}
  \co{b}.\varsesst_{2,a}\optional{\sand \cend}_{\win\in\rset{b}}$,
  then by rules~\rulename{e-input-end}
  and/or~\rulename{e-input-output} we conclude $\sesst \seqs
  \cbottom$.

\item if $\varsesst_1 \equiv \bigsor_{a\in\rset{a}}
  a.\varsesst_{1,a}\sor \cend$ and $\varsesst_2 \equiv
  \bigsand_{b\in\rset{b}} \co{b}.\varsesst_{2,a}\optional{\sand
    \cend}_{\win\in\rset{b}}$, then by
  rule~\rulename{e-input-output-end} we conclude $\sesst \seqs
  \bigsand_{b\in\rset{b}} \co{b}.\varsesst_{2,a}\sand\cend$.
\end{itemize}

\item If $\sesst \equiv \sesst_1 \sor \sesst_2$, then we reason in a
  dual fashion with respect to the previous case.
\qedhere
\end{itemize}
\end{proof}

\begin{theorem}[Theorem~\ref{thm:algorithm}]
  Let $\sesst$ and $\varsesst$ be in normal form. Then $\sesst \subs
  \varsesst$ if and only if $\sesst \suba \varsesst$.
\end{theorem}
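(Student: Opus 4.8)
The plan is to prove the two directions separately, both by induction on the sum of the depths of $\sesst$ and $\varsesst$, reducing membership in $\csem{\sesst}$ and $\csem{\varsesst}$ to the behavioral predicates furnished by Lemma~\ref{lem:process_nf}. Throughout I treat $\cbottom$ as the empty union $\bigsor_{a\in\emptyset}a.\sesst_a$, $\ctop$ as the empty intersection $\bigsand_{a\in\emptyset}\co{a}.\sesst_a$, and $\cend$ as the degenerate form with empty action set and the optional $\cend$ summand present, so that every normal form is uniformly either an input form $\bigsor_{a\in\rset{a}} a.\sesst_a\optional{\sor\cend}_{\win\in\rset{a}}$ or an output form $\bigsand_{a\in\rset{a}} \co{a}.\sesst_a\optional{\sand\cend}_{\win\in\rset{a}}$, and each side condition of the rules in Table~\ref{tab:axioms} becomes a containment of action sets together with a containment of continuations.

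\emph{Soundness} ($\sesst\suba\varsesst \Rightarrow \sesst\subs\varsesst$) proceeds by induction on the derivation. Rules \rulename{s-bottom} and \rulename{s-top} are immediate from $\csem\cbottom=\mbottom$ and $\csem\ctop=\mtop$. For \rulename{s-input} I take any $\process\in\csem{\sesst}$; by Lemma~\ref{lem:process_nf}(1) its visible outputs lie in $\coset{\rset a}$ and each continuation $\process(\co a)$ lies in $\csem{\sesst_a}$, so from $\rset a\subseteq\rset b$ and the induction hypothesis $\csem{\sesst_a}\subseteq\csem{\varsesst_a}$ the same characterisation places $\process\in\csem{\varsesst}$. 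Rule \rulename{s-output} is dual, using Lemma~\ref{lem:process_nf}(2). For the axiom \rulename{s-end}, any client of $\bigsand_{a\in\rset a}\co a.\sesst_a\sand\cend$ must guarantee $\win$, and the incompatibility between guaranteeing an input and being able to emit an output---the fact already underlying laws \rulename{e-input-output} and \rulename{e-input-output-end} in Lemma~\ref{lem:simplify}---forces its only visible output to be $\win$; hence it satisfies the requirements of Lemma~\ref{lem:process_nf}(1) for $\bigsor_{b\in\rset b} b.\varsesst_b\sor\cend$.

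\emph{Completeness} ($\sesst\subs\varsesst\Rightarrow\sesst\suba\varsesst$) is the harder direction and proceeds by a case analysis on the shapes of the two normal forms, again by induction on depth. When $\sesst\equiv\cbottom$ the conclusion follows from \rulename{s-bottom} or from \rulename{s-input} with empty index, and dually when $\varsesst\equiv\ctop$. When both are input forms I derive \rulename{s-input}. To obtain $\rset a\subseteq\rset b$ I argue by contraposition: for a putative $a_0\in\rset a\setminus\rset b$ the client $\co{a_0}.\process'$ with $\process'\in\csem{\sesst_{a_0}}$ lies in $\csem{\sesst}$ by Lemma~\ref{lem:process_nf}(1) yet emits $\co{a_0}\notin\coset{\rset b}$, contradicting $\sesst\subs\varsesst$; for $\win\in\rset a\setminus\rset b$ the witness $\pwin$ does the same job. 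The same witnesses $\co a.\process'$ then show $\csem{\sesst_a}\subseteq\csem{\varsesst_a}$ for each $a\in\rset a$, so the induction hypothesis yields $\sesst_a\suba\varsesst_a$ and the premises of \rulename{s-input} are met. The case of two output forms is dual and yields \rulename{s-output} via Lemma~\ref{lem:process_nf}(2).

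The remaining, most delicate, cases are the mixed ones where $\sesst$ and $\varsesst$ have opposite polarity. If $\sesst$ is an output form carrying a genuine output and $\varsesst$ is an input form, I show $\csem{\sesst}\nsubseteq\csem{\varsesst}$: a client that guarantees every input of $\sesst$ but reaches a stable state only by receiving---hence never itself emits an output---belongs to $\csem{\sesst}$ by Lemma~\ref{lem:process_nf}(2) yet violates the may-convergence clause of Lemma~\ref{lem:process_nf}(1), so the hypothesis $\sesst\subs\varsesst$ is vacuously unavailable; the symmetric clash rules out a genuine input form below an output form. The one genuinely productive mixed configuration is precisely that of the axiom \rulename{s-end}, where $\sesst$ is an output form carrying $\cend$ and $\varsesst$ an input form carrying $\cend$, and the axiom fires unconditionally, matching the soundness computation above. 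I expect the main obstacle to be exactly the bookkeeping in these mixed and $\cend$-bordering cases: one must track, on each side, whether the optional $\cend$ summand is present (equivalently whether $\win$ belongs to the relevant action set) and choose witness clients that simultaneously realise the intended continuation and respect the may/must convergence constraints, relying throughout on the incompatibility facts established in Lemma~\ref{lem:simplify} and on the two characterisations of Lemma~\ref{lem:process_nf}.
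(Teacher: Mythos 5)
Your overall strategy---induction on depth, reduction to the behavioural characterisations of Lemma~\ref{lem:process_nf}, and a case analysis on the polarities of the two normal forms---is the same as the paper's, and your $\cbottom$/$\ctop$ cases, your same-polarity cases, and your soundness direction (which the paper dismisses as trivial) are fine. The genuine gap is in your first mixed case. The claim ``if $\sesst$ is an output form carrying a genuine output and $\varsesst$ is an input form, then $\csem{\sesst}\nsubseteq\csem{\varsesst}$'' is false: take $\sesst \equiv \co{c}.\cend \sand \cend$ and $\varsesst \equiv b.\cend \sor \cend$, both in normal form. Every client of $\sesst$ must guarantee the input $c$ \emph{and} guarantee $\win$; guaranteeing an input forces (via rule~\rulename{r6}) its only possible visible output to be $\win$, and $\win \in \coset{\rset{b}}$ because $\varsesst$ carries $\cend$, so every client of $\sesst$ satisfies the characterisation of Lemma~\ref{lem:process_nf}(1) for $\varsesst$, i.e.\ $\sesst \subs \varsesst$ holds. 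This is exactly the inequality that axiom \rulename{s-end} is designed to derive: note that its left-hand side $\bigsand_{a\in\rset{a}}\co{a}.\sesst_a \sand \cend$ allows arbitrary output prefixes alongside $\cend$. Your proposed witness---a client that guarantees every input of $\sesst$ but ``never itself emits an output''---simply does not belong to $\csem{\sesst}$ here: when $\win\in\rset{a}$, Lemma~\ref{lem:process_nf}(2) requires clients of $\sesst$ to satisfy $\process\must\win$, which your witness violates. So your argument covers only the subcase $\win\nin\rset{a}$.

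Consequently your case split is incorrect, not merely incomplete in its ``bookkeeping'': the dichotomy in the mixed case ($\sesst$ output form, $\varsesst$ input form) must be on whether $\win\in\rset{a}$, not on whether $\rset{a}$ contains a genuine name; as stated, your case (output form with a genuine output) and your case (output form carrying $\cend$) overlap and reach contradictory conclusions on types like $\co{c}.\cend\sand\cend$. The paper's proof handles this uniformly: from the hypothesis $\sesst\subs\varsesst$ it deduces $\win\in\rset{a}\cap\rset{b}$---if $\win\nin\rset{a}$, the client $\sum_{a\in\rset{a}}a.\varprocess_a$ of $\sesst$ fails the may-convergence clause of Lemma~\ref{lem:process_nf}(1), and the client $\sum_{a\in\rset{a}}a.\varprocess_a + \pwin$ has $\win$ among its may-outputs, forcing $\win\in\rset{b}$---and then applies \rulename{s-end}, which fires regardless of which output prefixes accompany $\cend$ on the left and which input prefixes accompany it on the right. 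Repairing your proof requires replacing your first mixed claim by exactly this argument; everything else you wrote can stand.
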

\begin{proof}
  The ``if'' part is trivial since $\suba$ axiomatizes obvious
  properties of $\subs$. Regarding the ``only if'' part, we proceed by
  induction on the depth of $\sesst$ and $\varsesst$ and by cases on
  their (normal) form. We omit dual cases:
\begin{itemize}
\item ($\sesst \equiv \cbottom$) We conclude with an application of
  either~\rulename{s-bottom} or~\rulename{s-input} according to the
  form of $\varsesst$.

\item ($\varsesst \equiv \ctop$) We conclude with an application of
  either~\rulename{s-top} or~\rulename{s-output} according to the form
  of $\sesst$.

\item ($\sesst \equiv \bigsor_{a\in\rset{a}}
  a.\sesst_a\optional{\sor\cend}_{\win\in\rset{a}}$ and $\varsesst
  \equiv \bigsor_{b\in\rset{b}}
  b.\varsesst_b\optional{\sor\cend}_{\win\in\rset{b}}$)
  From the hypothesis $\sesst \subs \varsesst$ and
  Lemma~\ref{lem:process_nf} we deduce $\rset{a} \subseteq \rset{b}$
  and $\sesst_a \subs \varsesst_a$ for every $a\in\rset{a}$.
  By induction hypothesis we derive $\sesst_a \suba \varsesst_a$ for
  every $a\in\rset{a}$, and we conclude with an application of
  rule~\rulename{s-input}.

\item ($\sesst \equiv \bigsor_{a\in\rset{a}}
  a.\sesst_a\optional{\sor\cend}_{\win\in\rset{a}}$ and $\varsesst
  \equiv \bigsand_{b\in\rset{b}}
  \co{b}.\varsesst_b\optional{\sand\cend}_{\win\in\rset{b}}$)
  For every $\process \in \csem\sesst$ we have $\{ \mu \mid \process
  \may \mu \} \subseteq \coset{\rset{a}}$ and $\emptyset \ne \rset{b}
  \subseteq \{ \mu \mid \process \must \mu \}$ from which we deduce
  $\rset{a} = \rset{b} = \{ \win \}$.
  We conclude with an application of rule~\rulename{s-end}.

\item ($\sesst \equiv \bigsand_{a\in\rset{a}}
  \co{a}.\sesst_a\optional{\sand\cend}_{\win\in\rset{a}}$ and
  $\varsesst \equiv \bigsand_{b\in\rset{b}}
  \co{b}.\varsesst_b\optional{\sand\cend}_{\win\in\rset{b}}$)
  For every $\process \in \csem\sesst$ we have $\rset{a} \subseteq \{
  \mu \mid \process \must \mu \}$ implies $\rset{b} \subseteq \{ \mu
  \mid \process \must \mu \}$ meaning $\rset{b} \subseteq \rset{a}$.
  Furthermore, $\sesst_b \subs \varsesst_b$ for every $b \in \rset{b}$.
  By induction hypothesis we deduce $\sesst_b \suba \varsesst_b$ for
  every $b \in \rset{b}$, and we conclude with an application of
  rule~\rulename{s-output}.

\item ($\sesst \equiv \bigsand_{a\in\rset{a}}
  \co{a}.\sesst_a\optional{\sand\cend}_{\win\in\rset{a}}$ and
  $\varsesst \equiv \bigsor_{b\in\rset{b}}
  b.\varsesst_b\optional{\sor\cend}_{\win\in\rset{b}}$)
  For every $\process \in \csem{\sesst}$ we have $\rset{a} \subseteq
  \{ \mu \mid \process \must \mu \}$ implies $\{ \mu \mid \process
  \may \mu \} \subseteq \coset{\rset{b}}$, from which we deduce $\win
  \in \rset{a} \cap \rset{b}$.
  We conclude with an application of rule~\rulename{s-end}.
  \qedhere
\end{itemize}
\end{proof}

\subsection{Type Checker}

\begin{theorem}[Theorem~\ref{thm:checker}]
  If $\sesst \vdash \process$ and $\co\sesst \vdash \varprocess$, then
  $\process \orthogonal \varprocess$.
\end{theorem}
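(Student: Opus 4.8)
The plan is to reduce the statement to a \emph{soundness} lemma for the type checker, namely that $\sesst \vdash \process$ implies $\process \in \orth{\csem\sesst}$ (equivalently $\process \in \csem{\co\sesst}$, by Theorem~\ref{thm:dual}). Intuitively this says that a well-typed server lies in the orthogonal of the set of clients it is meant to serve, hence is orthogonal to each of them. Granting this lemma, the theorem follows quickly. From $\sesst \vdash \process$ I get $\process \in \orth{\csem\sesst}$, and from $\co\sesst \vdash \varprocess$ I get $\varprocess \in \orth{\csem{\co\sesst}} = \orth{\orth{\csem\sesst}} = \biorth{\csem\sesst} = \csem\sesst$, where I use $\csem{\co\sesst} = \orth{\csem\sesst}$ (Theorem~\ref{thm:dual}) and that $\csem\sesst$ is closed (Proposition~\ref{prop:closed}). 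Now $\process \in \orth{\csem\sesst}$ unfolds, by Definition~\ref{def:orth}, to $\csem\sesst \subseteq \sem\process$; since $\varprocess \in \csem\sesst$ this yields $\varprocess \in \sem\process$, i.e.\ $\process \orthogonal \varprocess$. Note that this argument needs no viability hypothesis: when $\sesst \seqs \cbottom$ one has $\orth{\csem\sesst} = \orth\emptyset = \procset$, so the $\cbottom$ cases are absorbed for free.

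I would then prove the soundness lemma by induction on the derivation of $\sesst \vdash \process$, one case per rule. Rule \rulename{t-nil} is immediate since $\orth{\csem\cbottom} = \procset$. Rule \rulename{t-end} needs $\pwin \in \orth{\csem\cend} = \csem{\co\cend} = \csem\cend$, which holds by extensivity (Proposition~\ref{prop:closure}(1)). For \rulename{t-send} I use $\csem{\co{\co a.\sesst}} = \csem{a.\co\sesst} = \gsem{\co a}{\csem{\co\sesst}}$: the process $\co a.\process$ has no internal transitions, so $\co a.\process \must \co a$ with continuation exactly $\process$, which lies in $\csem{\co\sesst} = \orth{\csem\sesst}$ by the induction hypothesis. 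For \rulename{t-sub} I invoke anti-monotonicity of $\orth{(\cdot)}$ (Proposition~\ref{prop:orth}(3)): from $\varsesst \subs \sesst$, i.e.\ $\csem\varsesst \subseteq \csem\sesst$, I get $\orth{\csem\sesst} \subseteq \orth{\csem\varsesst}$, so $\process$ stays in the orthogonal. Rule \rulename{t-choice} relies on the key auxiliary observation that every closed set is closed under $\oplus$: since $\sem{\process \oplus \varprocess} = \sem\process \cap \sem\varprocess$, if $\process,\varprocess \in \orth{X}$ then $X \subseteq \sem\process \cap \sem\varprocess = \sem{\process \oplus \varprocess}$, whence $\process \oplus \varprocess \in \orth{X}$.

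The main obstacle will be rule \rulename{t-receive}, where the server $\sum_{i\in I} a_i.\process_i$ is typed by $\bigsor_{i\in I} a_i.\sesst_{a_i}$ and several branches may be guarded by the same input name. Here I must show $\sum_i a_i.\process_i \in \orth{\csem{\bigsor_i a_i.\sesst_{a_i}}} = \csem{\bigsand_{a\in\rset{a}} \co a.\co{\sesst_a}}$, where $\rset{a} = \{ a_i \mid i \in I\}$. Computing this intersection type as $\bigcap_{a\in\rset{a}} \gsem{a}{\orth{\csem{\sesst_a}}}$ (or applying Lemma~\ref{lem:process_nf}(2)), I must check that the process guarantees each input $a \in \rset{a}$ and that its continuation after $a$ lands in $\orth{\csem{\sesst_a}}$. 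Since all prefixes are inputs, the process has no internal moves, so it does guarantee every $a \in \rset{a}$; and by Definition~\ref{def:continuation} its continuation after $a$ is exactly $\bigoplus_{i\,:\,a_i = a} \process_i$, the internal choice of all same-action branches. Each such $\process_i$ is in the closed set $\orth{\csem{\sesst_a}}$ by the induction hypothesis (the rule forces equal continuation types for equal input names), and closure under $\oplus$ shows their internal choice is too. The degenerate sub-cases $\csem{\sesst_a} = \emptyset$ fall under the $\gsem\alpha{X} = \procset$ clause and are vacuous, which is precisely why no separate treatment of non-viable continuations is required; the remaining output/intersection behaviours are obtained by combining \rulename{t-send}, \rulename{t-choice}, and \rulename{t-sub}, so nothing further is needed.
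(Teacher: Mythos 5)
Your proposal is correct and follows essentially the same route as the paper's proof: the same key lemma ($\sesst \vdash \process$ implies $\process \in \orth{\csem\sesst}$, proved by induction on the typing derivation), the same derivation of orthogonality from it via Theorem~\ref{thm:dual} and closedness of $\csem\sesst$, and the same case analysis, including handling duplicate input guards through the internal choice of same-name continuations. The differences are purely presentational: you route some computations through Theorem~\ref{thm:dual} where the paper uses Propositions~\ref{prop:orth} and~\ref{prop:gsem} directly, and you spell out details (closure of orthogonal sets under $\oplus$, the degenerate $\gsem{\alpha}{X} = \procset$ clause) that the paper leaves implicit.
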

\begin{proof}
  It is sufficient to show that $\sesst \vdash \process$ implies
  $\process \in \orth{\csem\sesst}$ for some generic $\process$ and
  $\sesst$. Then, by Theorem~\ref{thm:dual} we have $\varprocess \in
  \orth{\csem{\co\sesst}} = \biorth{\csem\sesst} = \csem\sesst$ and we
  conclude $\process \orthogonal \varprocess$ by definition of orthogonal
  set.
  We prove that $\sesst \vdash \process$ implies $\process \in
  \orth{\csem\sesst}$ by induction on the derivation of $\sesst \vdash
  \process$ and by cases on the last rule applied:
\begin{itemize}
\item \rulename{t-nil} Then $\process = \pnull$ and $\sesst =
  \cbottom$ and we conclude $\pnull \in \procset = \orth\emptyset =
  \orth{\csem{\cbottom}}$.

\item \rulename{t-end} Then $\process = \pwin$ and $\sesst = \cend$
  and we conclude $\pwin \in \orth{\csem\cend} = \csem\cend = \{
  \process \in \procset \mid \process \must \win \}$.

\item \rulename{t-send} Then $\process = \co a.\varprocess$ and
  $\sesst = \co a.\varsesst$ for some $\varprocess$ and $\varsesst$
  such that $\varsesst \vdash \varprocess$.
  By induction hypothesis we deduce $\varprocess \in
  \orth{\csem\varsesst}$.
  We conclude $\process \in \orth{\csem\sesst} =
  \orth{\gsem{a}{\csem\varsesst}} = \gsem{\co
    a}{\orth{\csem\varsesst}}$ since $\process \must \co a$ and
  $\process(\co a) = \varprocess \in \orth{\csem\varsesst}$.

\item \rulename{t-receive} Then $\process = \sum_{i\in I}
  a_i.\process_i$ and $\sesst = \bigsor_{i\in I} a_i.\sesst_{a_i}$
  where $\sesst_{a_i} \vdash \process_i$ for every $i\in I$.
  By induction hypothesis we have $\process_i \in
  \orth{\csem{\sesst_{a_i}}}$ for every $i \in I$.
  We conclude $\process \in \orth{\sesst} = \orth{(\bigsor_{i\in I}
    a_i.\sesst_{a_i})} = \orth{\biorth{(\bigcup_{i\in I} \gsem{\co
        a_i}{\csem{\sesst_{a_i}}})}} = \orth{(\bigcup_{i\in I}
    \gsem{\co a_i}{\csem{\sesst_{a_i}}})} = \bigcap_{i\in I}
  \orth{\gsem{\co a_i}{\csem{\sesst_{a_i}}}} = \bigcap_{i\in I}
  \gsem{a_i}{\orth{\csem{\sesst_{a_i}}}}$ because $\process \must a_i$
  and $\process(a_i) = \bigoplus_{a_i = a_j} \process_j \in
  \orth{\csem{\sesst_{a_i}}}$ for every $i \in I$.

\item \rulename{t-choice} Then $\process = \process_1 \oplus
  \process_2$ where $\sesst \vdash \process_i$ for $i\in\{1,2\}$.
  By induction hypothesis we deduce $\process_i \in
  \orth{\csem\sesst}$ for $i \in \{1,2\}$, hence we conclude $\process
  \in \orth{\csem\sesst}$ because $\orth{\csem\sesst}$ is closed.

\item \rulename{t-sub} Then $\varsesst \vdash \process$ for some
  $\varsesst$ such that $\sesst \subs \varsesst$.
  By induction hypothesis we have $\process \in \orth{\csem\varsesst}$
  hence we conclude $\process \in \orth{\csem\sesst}$ since $\sesst
  \subs \varsesst$ implies $\orth{\csem\varsesst} \subseteq
  \orth{\csem\sesst}$ by Proposition~\ref{prop:orth}(3).
  \qedhere
\end{itemize}
\end{proof}


\end{document}